\documentclass[a4paper,american,reqno]{amsart}

\usepackage{babel}
\usepackage[utf8]{inputenc}
\usepackage[T1]{fontenc}
\usepackage{lmodern}
\usepackage{csquotes}
\usepackage{todonotes}
\usepackage{amssymb}
\usepackage{nicefrac}
\usepackage{aligned-overset}
\usepackage{xspace}
\usepackage{enumitem}
\usepackage[style = authoryear-comp,
            maxbibnames = 100,
            maxcitenames = 2,
            giveninits = true,
            uniquename = init,
            isbn = false,
            backend = bibtex]{biblatex}
\usepackage[colorlinks,
            citecolor=blue,
            urlcolor=blue,
            linkcolor=blue]{hyperref}
\usepackage{cleveref}

\makeatletter
\patchcmd{\@settitle}{\uppercasenonmath\@title}{\scshape\large}{}{}
\patchcmd{\@setauthors}{\MakeUppercase}{\scshape\normalsize}{}{}
\makeatother

\vfuzz \hfuzz
\makeatletter
\@namedef{subjclassname@2020}{%
  \textup{2020} Mathematics Subject Classification}
\makeatother

\newif\ifjournal\journalfalse
\newlist{thmparts}{enumerate}{1}
\setlist[thmparts]{
	label=\alph*)
}

\makeatletter
\apptocmd{\cref@getref}{\xdef\@lastusedlabel{#1}}{}{error}
\makeatother

\Crefformat{thmpart}{%
	\StrCount{\@lastusedlabel}{:}[\LastColonPos]%
	\StrBefore[\LastColonPos]{\@lastusedlabel}{:}[\ThmLabel]%
	\nameCref{\ThmLabel}~#2\ref*{\ThmLabel}#1#3%
}
\Crefmultiformat{thmpart}{%
	\StrCount{\@lastusedlabel}{:}[\LastColonPos]%
	\StrBefore[\LastColonPos]{\@lastusedlabel}{:}[\ThmLabel]%
	\nameCref{\ThmLabel}~#2\ref*{\ThmLabel}#1#3%
}{ and~#2#1#3}{, #2#1#3}{ and~#2#1#3}

\Crefname{assumption}{Assumption}{Assumption}
\Crefname{proposition}{Proposition}{Propositions}
\Crefname{theorem}{Theorem}{Theorems}
\Crefname{lemma}{Lemma}{Lemmas}

\newcommand{\jscom}[2][]{
	\ifthenelse{\equal{#1}{journal}}{
	\ifjournal\todo[author=JS, color=yellow!50, size=\small]{#2}\fi}{\todo[author=JS, color=yellow!50, size=\small]{#2}}
	}

\newenvironment{proofClaim}[1][]{\ifthenelse{\equal{#1}{}}{\begin{proof}}{\begin{proof}[#1]}}{\end{proof}}

\newcommand{\inte}{\mathrm{int}}
\newcommand{\con}{\mathrm{con}}

\newcommand{\BR}{\mathrm{BR}}

\newcommand{\xint}{z}
\newcommand{\xcon}{y}
\newcommand{\Vcon}{\phi}
\newcommand{\XINT}{Z}

\newcommand{\R}{\mathbb{R}}
\newcommand{\Z}{\mathbb{Z}}
\newcommand{\N}{\mathbb{N}}

\NewDocumentCommand{\freeset}{O{}}{S^{#1}}

\NewDocumentCommand{\FeasStrats}{O{}}{W_{#1}}
\NewDocumentCommand{\FeasStratsInt}{O{}}{W_{#1}^\inte}
\NewDocumentCommand{\FeasStratsCon}{O{}}{W_{#1}^\con}
\NewDocumentCommand{\FeasStratsRel}{O{}}{\hat{W}_{#1}}

\NewDocumentCommand{\acut}{O{ }}{an ANE-cut#1}
\NewDocumentCommand{\Cplus}{O{}O{ }}{$\appro_{#1}^>$-witness#2}

\NewDocumentCommand{\minapprox}{O{ }}{simple binary search method#1}

\NewDocumentCommand{\eqtuple}{O{ }}{equilibrium tuple#1}
\NewDocumentCommand{\BinBC}{O{ }}{adaptive B\&C method#1}
\NewDocumentCommand{\aBinBC}{O{ }}{an adaptive B\&C method#1}

\AddToHook{cmd/appendix/before}{\crefalias{section}{appendix}}

\newcommand{\AddThm}[2]{
	\newtheorem{#1}[thmcntr]{#2}
	\AddToHook{env/#1/begin}{\crefalias{thmcntr}{#1}}
}

\theoremstyle{definition}
\AddThm{definition}{Definition}
\AddThm{example}{Example}
\AddThm{proposition}{Proposition}
\AddThm{lemma}{Lemma}
\AddThm{theorem}{Theorem}
\AddThm{conjecture}{Conjecture}
\AddThm{question}{Question}
\AddThm{corollary}{Corollary}
\AddThm{claim}{Claim}
\AddThm{assumption}{Assumption}
\AddThm{remark}{Remark}
\AddThm{invariant}{Invariant}
\AddThm{subclaim}{Subclaim}

\newcommand{\wrt}{w.r.t.\ }

\NewDocumentCommand{\NeInt}{O{\ }}{TODO{#1}}

\newcommand{\defset}[3][\defsep]{\set{#2#1#3}}
\newcommand{\Defset}[3][\defsep]{\Set{#2#1#3}}
\newcommand{\set}[1]{\{#1\}}
\newcommand{\Set}[1]{\left\{#1\right\}}
\newcommand{\st}{\text{s.t.}}
\newcommand{\define}{\mathrel{{\mathop:}{=}}}

\bibliography{RationalEQ}

\begin{document}

\title[When do Mixed-Integer Games Admit Rational Equilibria?]%
{When do Mixed-Integer Games\\Admit Rational Equilibria?}

\author[A. Duguet, T. Harks, M. Schmidt, J. Schwarz]%
{Aloïs Duguet, Tobias Harks, Martin Schmidt, Julian Schwarz}

\address[A. Duguet, M. Schmidt]{%
  Trier University,
  Department of Mathematics,
  Universitätsring 15,
  54296 Trier,
  Germany}
\email{duguet@uni-trier.de}
\email{martin.schmidt@uni-trier.de}

\address[T. Harks, J. Schwarz]{%
  University of Passau,
  Faculty of Computer Science and Mathematics,
  94032 Passau,
  Germany}
\email{tobias.harks@uni-passau.de}
\email{julian.schwarz@uni-passau.de}

\date{\today}

\begin{abstract}
  %We consider linear-quadratic generalized Nash equilibrium problems with
%mixed-integer variables and study the question of the existence of
%\emph{rational equilibria} assuming rational input data.
We consider mixed-integer linear-quadratic generalized Nash
equilibrium problems, i.e., games in which each player
solves a mixed-integer program subject to linear constraints in her
own and rivals' strategies  as well as an objective which is
quadratic in her own strategies and bilinear
in her own  and rivals' strategies.
For this class of games, we study the question of the existence of
\emph{rational equilibria} assuming rational input data.
We distinguish four subclasses according to the presence of
player-quadratic terms in the objective and rival-dependent constraints.
As our main result, we completely settle the rationality question for
all four subclasses, i.e., we show that only player-linear games
without player-quadratic terms and without rival-dependent constraints
admit rational equilibria---if the game admits equilibria at all.
All other three classes contain instances with irrational
equilibria only.

%%% Local Variables:
%%% mode: latex
%%% TeX-master: "RationalEQ"
%%% End:
\end{abstract}

\keywords{Nash equilibrium problems,
Generalized Nash equilibrium problems,
Mixed-integer games,
Existence,
Rational vs.\ irrational equilibria%
%
%
%%% Local Variables:
%%% mode: latex
%%% TeX-master: "RationalEQ"
%%% End:
}
\subjclass[2020]{90C11, % Mixed integer programming
91Axx, % Game theory
%
%%% Local Variables:
%%% mode: latex
%%% TeX-master: "RationalEQ"
%%% End:
}

\maketitle

\section{Introduction}
\label{sec:introduction}

Consider the class of linear-quadratic
generalized Nash equilibrium problems (GNEPs)
with player set~$N = \set{1, \dotsc,n}$, where
each player~$i \in N$ solves the optimization problem
\begin{equation}
  % \label{opt: player-intro}
  \notag
  \begin{split}
    \min_{x_i\in    \R^{l_i}_+} \quad
    & \sum_{j\in N} x_j^\top Q_{ij} x_i + d_i^\top x_i \\
    \st  \quad
    & \sum_{j\in N}A_{ij} x_j \geq b_i.
  \end{split}
\end{equation}
Here, $x_i\in \R^{l_i}_+$ is the strategy of player~$i$, which we
assume (w.l.o.g.) to be non-negative.
If the matrices $Q_i$, $i\in N$ are positive semidefinite, we obtain
the classic linear-quadratic convex generalized Nash equilibrium
problem (GNEP) that contains, among others, the 2-player mixed Nash
equilibrium problem \parencite{nash1950equilibrium} as a special
case.
For such convex problems, Nash equilibria can be characterized
by the players' KKT conditions, which in turn can be stated as a
linear complementarity problem (LCP).
For rational input data, i.e., $Q_{ij}$, $A_{ij}$, $d_i$, and $b_i$
are rational for all $i,j \in N$, it follows by basic linear
programming arguments that the solution set of an LCP is either empty
or it contains a rational point.
Indeed, the pivoting algorithm by~\textcite{Lemke1965} or the
Lemke--Howson algorithm (cf.~\textcite{Lemke1964}) for bimatrix games
would output such a rational solution. 

In this paper, we consider the more general class of
\emph{mixed-integer} linear-quadratic generalized Nash equilibrium
problems of the form
\begin{equation}
  \label{opt: player}
  \tag{$\mathcal{P}_i(x_{-i})$}
  \begin{split}
    \min_{x_i} \quad
    & \sum_{j\in N} x_j^\top Q_{ij} x_i + d_i^\top x_i \\
    \st  \quad
    & \sum_{j\in N}A_{ij} x_j \geq b_i, \\
    & x_i\in \R^{l_i}_+ \times \Z^{k_i}_+,
  \end{split}
\end{equation}
where we again assume rational data $Q_{ij}$, $A_{ij}$, $d_i$, and
$b_i$ for all $i,j\in N$. 
Despite the presence of integer variables,
it is known that mixed-integer linear-quadratic \emph{optimization
  problems} with convex objective (i.e., the case $n=1$) always admits
a rational solution if the problem admits a solution at all since the
KKT conditions for fixed integer variables yield an LCP.
For the general case of $n\geq 2$, we investigate in this paper the
question whether a \emph{rational} Nash equilibrium exists in case of
the game admitting any Nash equilibrium.
Note the fundamental nature of this question
because rationality of equilibria is a \emph{necessary} prerequisite
for the existence of exact finite-time algorithms under the Turing
machine model.
Moreover, the class of mixed-integer linear-quadratic GNEPs we consider here
is highly relevant. The purely continuous sub-class has a long-standing
history in game theory (cf.~the survey of
\textcite{facchinei2010generalized}), while the purely integer sub-class
has gained significant interest over the last two decades within the
realm of integer programming
games \parencite{kirst2022branch,schwarze2023branch,dragotto2023zero,carvalho2023integer}.
Transitioning to a mixed-integer framework represents the natural next
step and very recently, a few papers started to study the
topic \parencite{sagratella2017computing,Sagratella2017AlgorithmsFG,Harks24,DSHS25,Duguet25MNE}.
In particular, no results on the rationality of mixed-integer games are, to the best of our knowledge, known.\footnote{Note that the classical three player game of \cite{Nash51} showing a unique irrational mixed-Nash equilibrium relies on the fact that the expected payoff
of the players is a multilinear \emph{cubic} function so that the
equilibrium conditions lead to a quadratic equation. Our model is linear-quadratic and does not include cubic terms.}

In the following, we distinguish four problem sub-classes
according to whether or not quadratic terms in the objective are
present and whether or not rival-dependent constraints are present:
\begin{enumerate}
\item[(i)] general player-quadratic GNEPs  (PQ-GNEP) as above;
\item[(ii)] player-linear GNEPs (PL-GNEP), where $Q_{ii}= 0$ for all
  $i\in N$;
\item[(iii)] player-quadratic NEPs (PQ-NEP), where $A_{ij}=0$ for all
  $j\neq i$ and $i\in N$;
\item[(iv)] player-linear NEPs (PL-NEP), where $Q_{ii}= 0,A_{ij}=0$
  for all $j\neq i,i\in N$.
\end{enumerate}
Our main result completely classifies all four cases.
In particular, we show that only player-linear NEPs admit rational
equilibria (if they exist) and no other class does so in general:
\begin{center}
  \begin{tabular}{ c|c|c }
    & NEP & GNEP \\
    \hline
    PL&  yes (Sec.~\ref{sec:pos-result-linear-neps}, Thm.~\ref{thm:pos}) & no (Sec.~\ref{sec:counterGNEP}, Ex.~\ref{ex:GNEP}) \\
    PQ&  no (Sec.~\ref{sec:counter-nep}) & no (Sec.~\ref{sec:counterGNEP}, Ex.~\ref{ex:GNEP})
\end{tabular}
\end{center}

%%% Local Variables:
%%% mode: latex
%%% TeX-master: "RationalEQ"
%%% End:

\section{Preliminaries}
\label{sec:problem-statement}

Let us brief\/ly introduce some additional terminology we use
throughout the paper.
We denote by $\pi_i:  \prod_{j \in N}  \R^{l_j + k_j} \to \R$
the cost function of player $i$.
Moreover, we use standard game-theoretic notation and write
$x_{-i}$ for the vector of strategies of all players except player~$i$
and call a vector of strategies $x=(x_i)_{i\in N}$ strategy profile.
For any $x_{-i}$, we define the strategy set of player $i$ by
$X_i(x_{-i})\coloneq \defset{x_i \in  \R^{l_i} \times \Z^{k_i}}{
  \sum_{j\in N}A_{ij} x_j \geq b_i }$ and call the product
$X(x)\coloneq \prod_{i \in N}X_i(x_{-i})$ the joint strategy set at $x$.
We call a strategy profile $x$ feasible if $x \in X(x)$ holds.

A  feasible strategy profile~$x^* \in X(x^*)$ is called a
(generalized) Nash equilibrium if for all $i\in N$ the following
holds:
\begin{equation*}
  \pi_i(x_i^*,x_{-i}^*) \leq \pi_i(x_i,x_{-i}^*)
  \quad  \text{ for all } x_i \in X_i(x_{-i}^*).
\end{equation*}

We use the following notation throughout the paper.
If nothing else is stated, we assume that a strategy profile is of the
form $x=(\xcon,\xint)$ where $\xcon_i \define (x_{i,1}, \dotsc, x_{i,
  l_i})$ denotes the continuous components of player $i$'s strategy
and analogously $\xint_i \define (x_{i,l_i+1}, \dotsc, x_{i,l_i+k_i})$
the integer components.
We use the analogue notation for the entire and partial strategy
profiles~$x$ and~$x_{-i}$, e.g., we abbreviate $\xint_{-i} \define
(\xint_j)_{j\neq i}$.

%%% Local Variables:
%%% mode: latex
%%% TeX-master: "RationalEQ"
%%% End:

\section{Positive Result for Player-Linear NEPs}
\label{sec:pos-result-linear-neps}

In this section, we prove---under our working assumption of rational
input data---that player-linear NEPs always admit a rational NE if
they admit an NE at all.
Our proof is constructive and---under the additional condition
of the set of feasible strategies being bounded---can be used to implement a finite time
algorithm determining whether an equilibrium exists and in case of
existence, outputs one, cf.~\Cref{rem: FiniteTime}.
To this end, we assume for the entire section that for all $i\in
N$, we have $Q_{ii} = 0$ and $A_{ij}=0$ for $j\neq i$.
In order to prove the result, we require the following definitions.
For any integer strategy profile $\xint^*$, we define the following
linear complementary problem (LCP) in the continuous variables
$\xcon,\lambda$ for fixed $\xint^*$:
\begin{equation}
  \label{eq: LCP}\tag{LCP$(\xint^*)$}
  \begin{split}
    \sum_{j\neq i} \big(Q_{ij}^\con\big)^\top (\xcon_j,\xint^*_j)  + d_i^\con-
    \big( A_{ii}^\con \big)^\top \lambda_i \geq 0
    & \quad\text{ for all } i \in N,\\
    \xcon_i^\top\Big(\sum_{j\neq i} \big(Q_{ij}^\con\big)^\top (\xcon_j,\xint^*_j)  + d_i^\con-
    \big( A_{ii}^\con \big)^\top \lambda_i\Big) = 0
    & \quad\text{ for all } i \in N,
    \\
    A_{ii}(\xcon_i,\xint^*_i) - b_i \geq 0
    &\quad\text{ for all } i \in N,
    \\
    \lambda_i^\top\Big(A_{ii}(\xcon_i,\xint^*_i) - b_i\Big) = 0
    &\quad\text{ for all } i \in N,
    \\
    \xcon_i\in \R^{l_i}_+, \ \lambda_i\in \R^{m_i}_+
    &\quad\text{ for all } i \in N,
  \end{split}
\end{equation}
where $m_i$ is the dimension of $b_i$ and $Q_{ij}^\con$ as well as
$A_{ii}^\con$ denote the sub-matrices corresponding to the
continuous part of $x_i$. Analogously $d_i^\con$ denotes the sub-vector
corresponding to the continuous part of $x_i$.
This LCP corresponds to the KKT conditions of the
continuous game induced by $\xint^*$.

In addition, we define the optimal-value function for any player $i
\in N$ for fixed integer strategy components $\xint^*$ as
\begin{align}\label{def: Vcon}
  \Vcon_i(\xcon_{-i};\xint^*)\coloneq
  \min_{\xcon_i}
  \Defset{\pi_i(\xcon_i,\xcon_{-i},\xint^*)}{A_{ii}^\con \xcon_i \geq
  b_i-A_{ii}^\inte \xint^*_i},
\end{align}
where $A_{ii}^\inte$ denotes the sub-matrix corresponding to the
integer part of $x_i$.
Based on this function, we define the system of
inequalities
\begin{align}
  \label{eq: IntOpt}\tag{$\Phi(\xint^*)$}
  \Vcon_i(\xcon_{-i};\xint^*)\leq
  \Vcon_i(\xcon_{-i};(\xint_i,\xint^*_{-i}))
  \quad
  \text{ for all } \xint_i\in \XINT_i \text{ and } i\in N
\end{align}
with $\XINT_i\coloneq \set{\xint_i \in \Z^{k_i}\mid \exists \xcon_i \in \R^{l_i}_+: A_{ii}^\con \xcon_i \geq
  b_i-A_{ii}^\inte \xint_i} $.
The above system is a necessary condition for a point~$(\xcon,\xint^*)$
to be an NE: For fixed $i \in N$ and $\xint_i$, the stated inequality
ensures that there exists a continuous strategy corresponding to
$\xint_i^*$   that is optimal compared to all possible unilateral
deviations with fixed integer part $\xint_i$.
The following lemma shows that, together with \eqref{eq: LCP}, we can
state a necessary and sufficient condition for the point~$(\xcon,\xint^*)$
to be an NE. In the following, we say that $\xcon^*$ solves \eqref{eq: LCP}
if there exist corresponding $\lambda^*_i$, $i\in N$, such that $\xcon^*$
and $\lambda^*_i$, $i \in N$, solve \eqref{eq: LCP}.

\begin{lemma}\label{lem: NEChara}
  Consider an arbitrary strategy profile $x^* = (\xcon^*,\xint^*)$.
  Then, $x^*$ is an NE if and only if  $\xcon^*$ solves \eqref{eq: LCP}
  and \eqref{eq: IntOpt} simultaneously.
\end{lemma}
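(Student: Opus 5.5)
Throughout, the integer parts $\xint^*$ are held fixed. The key observation is that in a player-linear NEP the problem of player~$i$, for fixed $\xint_i$ and fixed rival strategies, is a linear program in $\xcon_i$. Indeed, $Q_{ii}=0$ removes every quadratic term in $x_i$, and $A_{ij}=0$ for $j\neq i$ makes the feasible set independent of the rivals. The cost $\pi_i(\xcon_i,\xcon_{-i},\xint)$ is then affine in $\xcon_i$, with gradient
\[
\sum_{j\neq i}\big(Q_{ij}^\con\big)^\top(\xcon_j,\xint_j)+d_i^\con ,
\]
and the constraints are $A_{ii}^\con\xcon_i\geq b_i-A_{ii}^\inte\xint_i$, $\xcon_i\geq 0$. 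LP duality, i.e., the KKT conditions, which are necessary and sufficient for LPs, then shows that, for fixed $\xcon_{-i}$ and $\xint^*$, the point $\xcon^*_i$ is optimal for this LP if and only if there is $\lambda_i\geq 0$ such that the $i$-th block of \eqref{eq: LCP} holds. Consequently, $\xcon^*$ solves \eqref{eq: LCP} if and only if, for every $i$, $\xcon^*_i$ attains $\Vcon_i(\xcon^*_{-i};\xint^*)$, with $\xcon^*_i$ feasible.

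Second, I will rewrite the NE condition. Player~$i$'s feasible set $X_i$ is the disjoint union over $\xint_i\in\XINT_i$ of the continuous slices. Infeasible integer parts contribute nothing, which is exactly why the system is restricted to $\XINT_i$. Hence
\[
\min_{x_i\in X_i}\pi_i(x_i,x^*_{-i})=\inf_{\xint_i\in\XINT_i}\Vcon_i\big(\xcon^*_{-i};(\xint_i,\xint^*_{-i})\big),
\]
and $x^*$ is an NE if and only if $x^*$ is feasible and
\[
\pi_i(x^*)\leq\Vcon_i\big(\xcon^*_{-i};(\xint_i,\xint^*_{-i})\big)\quad\text{for all }\xint_i\in\XINT_i,\ i\in N.
\]

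For ``$\Rightarrow$'', I take $\xint_i=\xint^*_i$ in this inequality. Feasibility gives $\pi_i(x^*)\geq\Vcon_i(\xcon^*_{-i};\xint^*)$, so equality holds and $\xcon^*_i$ is an LP minimizer. The first step then yields \eqref{eq: LCP}, and substituting $\pi_i(x^*)=\Vcon_i(\xcon^*_{-i};\xint^*)$ into the remaining inequalities yields \eqref{eq: IntOpt}. For ``$\Leftarrow$'', \eqref{eq: LCP} gives feasibility of $x^*$ together with $\pi_i(x^*)=\Vcon_i(\xcon^*_{-i};\xint^*)$. Chaining this equality with \eqref{eq: IntOpt} gives the displayed NE inequality for every $\xint_i\in\XINT_i$, and hence for every deviation $x_i=(\xcon_i,\xint_i)$, since $\pi_i(\xcon_i,\xint_i,x^*_{-i})\geq\Vcon_i(\xcon^*_{-i};(\xint_i,\xint^*_{-i}))$ whenever $\xint_i\in\XINT_i$.

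The main obstacle is conceptual rather than technical: handling $\Vcon_i$ when the LP is unbounded, where $\Vcon_i=-\infty$. The values in \eqref{eq: IntOpt} must therefore be read in $\R\cup\{\pm\infty\}$. If $\Vcon_i(\xcon^*_{-i};\xint^*)=-\infty$, the LP has no minimizer, so \eqref{eq: LCP} cannot hold and $x^*$ is not an NE; both sides fail, which is consistent. If some deviation value is $-\infty$ while $\Vcon_i(\xcon^*_{-i};\xint^*)$ is finite, then \eqref{eq: IntOpt} fails and $x^*$ is not an NE, again consistent. I also need to check that the gradient term correctly separates the rival-dependent coefficient from the constant part $\sum_j x_j^\top Q_{ij}^\inte\xint_i$. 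That constant depends on $\xint_i$ and so differs across the slices, but it is absorbed into $\Vcon_i$ and does not affect the KKT conditions. Finally, the KKT conditions are sufficient here without any constraint qualification because the constraints are linear.
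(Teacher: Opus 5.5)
Your proposal is correct and follows essentially the same route as the paper. First, the KKT conditions of player~$i$'s LP with fixed integer part are exactly the $i$-th block of \eqref{eq: LCP} and give $\pi_i(x^*)=\Vcon_i(\xcon^*_{-i};\xint^*)$; second, given this equality, \eqref{eq: IntOpt} is precisely the equilibrium condition over all integer deviations. Your explicit handling of the unbounded case $\Vcon_i=-\infty$ adds a little rigor that the paper leaves implicit.
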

\begin{proof}
  We start by making two observations.
  \begin{description}[leftmargin=5pt]
  \item[1. Observation]
    Note that $\xcon^*$ solves~\eqref{eq: LCP} if and only if
    $\xcon_i^*$ is an optimal solution to
    \begin{align}
      \label{eq: PlayerCon}
      \min_{\xcon_i} \quad \pi_i(\xcon_i,\xcon_{-i},\xint^*)
      \quad \st \quad
      A_{ii}^\con\xcon_i \geq b_i -A_{ii}^\inte\xint^*_i
    \end{align}
    for each $i\in N$.
    This is true as the LCP in \eqref{eq: LCP} for a fixed $i\in N$
    contains the KKT conditions for the above problem.
    These KKT conditions are necessary and sufficient
    optimality conditions as the above problem is an LP.
    Moreover, we note that in case that $\xcon^*$ solves \eqref{eq:
      LCP}, then
    \begin{equation}
      \label{eq: ConOpt}
      \pi_i(x^*) = \Vcon_i(\xcon^*_{-i};\xint^*)
      \quad \text{for all } i\in N.
    \end{equation}
  \item[2. Observation]
    In the situation of \eqref{eq: ConOpt},
    $\xcon^*$ solves \eqref{eq: IntOpt} if and only if for all $i\in
    N$, we have
    \begin{equation*}
      \pi_i(x^*)\leq \pi_i(\xcon_i,\xint_i,\xcon_{-i}^*,\xint_{-i}^*)
      \quad \text{for all } (\xcon_i,\xint_i) \text{ with }
      A_{ii}(\xcon_i,\xint_i)\geq b_i,
    \end{equation*}
    which is exactly the equilibrium condition for $x^*$.
  \end{description}

  We now prove both directions of the lemma separately.
  \begin{description}[leftmargin=5pt]
  \item[``$\Rightarrow$'']
    Let $x^*$ be an NE. Then, every player is optimal \wrt unilateral
    deviations in her continuous variables, i.e., $\xcon_i^*$ solves
    \eqref{eq: PlayerCon} for each player $i\in N$. Hence, by our
    first observation, $\xcon^*$ solves \eqref{eq: LCP}. The second
    observation then shows that~$\xcon^*$ solves~\eqref{eq: IntOpt} since
    $x^*$ is an NE.
  \item[``$\Leftarrow$'']
    Let $x^*$ solve \eqref{eq: LCP} and \eqref{eq: IntOpt}.
    The first observation implies~\eqref{eq: ConOpt}.
    The second observation then shows that $x^*$ is an NE by $\xcon^*$
    solving \eqref{eq: IntOpt}. \qedhere
  \end{description}
\end{proof}

In order to prove the main result of this section, we will make
use of the following lemma.
\begin{lemma}\label{lem: RationalSol}
  Consider an arbitrary strategy profile $x^* = (\xcon^*,\xint^*)$. If
  the system of (in-)equalities given by~\eqref{eq: LCP}
  and~\eqref{eq: IntOpt} admits a solution, it admits a rational one
  as well.
\end{lemma}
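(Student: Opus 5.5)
The plan is to show that, for fixed $\xint^*$, the solution set of the combined system \eqref{eq: LCP} and \eqref{eq: IntOpt} in the variables $(\xcon,\lambda)$ is a finite union of polyhedra defined by rational data. A nonempty rational polyhedron contains a rational point, for instance a rational point in the relative interior of a minimal face. So it suffices to express the solution set in that form and pick a nonempty member of the union.

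We first handle \eqref{eq: LCP}. Fix a solution $(\xcon^*,\lambda^*)$ and record its complementarity pattern, i.e., which component of each complementary pair vanishes. Replacing each complementarity condition by the corresponding equality (choosing one equality in each pair where both sides vanish) turns \eqref{eq: LCP} into a finite system of linear equalities and inequalities in $(\xcon,\lambda)$ with rational coefficients, since $\xint^*$ is integral. This system contains $(\xcon^*,\lambda^*)$. Every point of the resulting polyhedron $P$ solves \eqref{eq: LCP}.

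Next we handle \eqref{eq: IntOpt} on $P$. Because $Q_{ii}=0$ and $A_{ij}=0$ for $j\neq i$, the objective of player $i$ is linear in $\xcon_i$. Its cost vector is affine in $\xcon_{-i}$ (for fixed integer parts), and its feasible set does not depend on $\xcon_{-i}$.

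On $P$ the left-hand side of \eqref{eq: IntOpt} is easy to describe: by \eqref{eq: ConOpt} it equals $\pi_i(\xcon_i,\xcon_{-i},\xint^*)$. This expression is quadratic only through bilinear terms $\xcon_j^\top Q \xcon_i$, so it is not linear. Using strong duality instead, it equals the dual objective $\lambda_i^\top(b_i-A_{ii}^\inte\xint^*_i)$ plus terms depending only on $\xcon_{-i}$ and $\xint^*$ (the constant part of $\pi_i$ in $\xcon_i$). The latter terms are linear in $\xcon_{-i}$ because $Q_{jj}=0$ enters only through other players' own terms. Hence the left-hand side is linear in $(\xcon,\lambda)$ on $P$.

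For the right-hand side, for each fixed $\xint_i\in\XINT_i$ the function $\Vcon_i(\cdot;(\xint_i,\xint^*_{-i}))$ is the minimum of an LP whose cost is affine in $\xcon_{-i}$ over a fixed rational polyhedron. We treat two cases.

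1. If that polyhedron has vertices, the value equals $\min_v \pi_i(v,\xcon_{-i},\ldots)$ over its finitely many rational vertices $v$, plus the requirement that the cost stays nonnegative along the finitely many rational extreme rays. Otherwise the value is $-\infty$, and the solution $\xcon^*$ shows that case cannot occur near $\xcon^*_{-i}$ on the relevant region.
2. If there are no vertices, this is handled similarly using a lineality decomposition.

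Thus the inequality for $\xint_i$ becomes ``LHS $\le$ each vertex value'', a finite family of linear inequalities in $(\xcon,\lambda)$, since $\pi_i(v,\cdot)$ is affine in $\xcon_{-i}$.

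The main obstacle is that $\XINT_i$ may be infinite, giving infinitely many inequalities. To address this, the plan is to note that every polyhedron $\{\xcon_i\ge0: A_{ii}^\con\xcon_i\ge b_i-A_{ii}^\inte\xint_i\}$ has the same recession cone and the same constraint matrix. Its vertices are therefore of the form $B^{-1}(b_i-A_{ii}^\inte\xint_i)_B$ for finitely many bases $B$. This makes each vertex value affine in $\xint_i$ with finitely many possible linear parts.

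Consequently the infinitely many constraints $\text{LHS}\le c_B(\xcon_{-i})+g_B(\xcon_{-i})^\top \xint_i$ can be replaced as follows. For each basis $B$, the condition $\inf_{\xint_i}$ over feasible integer $\xint_i$ with $B$ feasible is $\ge$ LHS. That infimum is over a mixed-integer set whose projection, by Meyer's theorem (rational data), has a convex hull that is a rational polyhedron. The infimum is therefore attained at finitely many rational generators, or equals $-\infty$, which would also contradict the existence of $\xcon^*$. Replacing the integer set by these finitely many generators yields finitely many linear inequalities that are valid exactly when the original ones are. Intersecting them with $P$ gives a nonempty rational polyhedron containing $(\xcon^*,\lambda^*)$, and any rational point in it solves both \eqref{eq: LCP} and \eqref{eq: IntOpt}.
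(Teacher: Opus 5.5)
Your route differs from the paper's. Once one step in your last paragraph is repaired, it is correct, and it is more complete than the paper's argument.

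\textbf{Comparison with the paper.} The paper argues globally. It writes the solution set of \eqref{eq: LCP} as a finite union of rational polyhedra, one per support pattern. It asserts the same for \eqref{eq: IntOpt} by viewing both sides as piecewise affine in $\xcon_{-i}$ (\Cref{lemma:opt-val-func-of-lp}, \Cref{lem: SystemPiecewise}). It then takes a rational vertex of a nonempty piece of the intersection.

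You work only on the piece $P$ containing $(\xcon^*,\lambda^*)$. Your key move is complementary slackness: on $P$ it rewrites $\Vcon_i(\xcon_{-i};\xint^*)$ as $\lambda_i^\top(b_i-A_{ii}^\inte\xint_i^*)$ plus the $\xcon_i$-free part of $\pi_i$. That part is affine in $\xcon_{-i}$ simply because $\xint_i^*$ is fixed; your justification via $Q_{jj}$ is garbled, but the conclusion holds. The left-hand sides become affine while each right-hand side is a minimum of affine functions. So on $P$, \eqref{eq: IntOpt} cuts out a convex set and no case split is needed.

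Your version also addresses two points the paper passes over:
\begin{itemize}
\item \Cref{lem: SystemPiecewise} covers only finitely many inequalities, although $\XINT_i$ may be infinite. Your basis decomposition plus the integer-hull theorem handles this.
\item \Cref{lemma:opt-val-func-of-lp} ignores unbounded LPs. The common recession cone $\{r\ge 0: A_{ii}^\con r\ge 0\}$ handles this on all of $P$, not just ``near $\xcon^*_{-i}$'', because the LP for $\xint_i^*$ has an optimum at every point of $P$.
\end{itemize}
Your case~2 is vacuous, since $\xcon_i\ge 0$ makes every feasible region pointed. In exchange, the paper's route describes the whole solution set, which is what Remark~\ref{rem: FiniteTime} uses.

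\textbf{The step that fails as written.} Let $S_B$ be the set of integer $\xint_i$ for which basis $B$ is feasible, and write $\mathrm{conv}(S_B)=\mathrm{conv}(V_B)+\mathrm{cone}(R_B)$. Then
\[
\inf_{\xint_i\in S_B} g_B(\xcon_{-i})^\top\xint_i=\min_{v\in V_B} g_B(\xcon_{-i})^\top v
\]
holds only if $g_B(\xcon_{-i})^\top r\ge 0$ for all $r\in R_B$; otherwise the infimum is $-\infty$. The existence of $\xcon^*$ excludes $-\infty$ only at $\xcon^*_{-i}$, because $g_B$ varies with $\xcon_{-i}$. So the generator inequalities alone are not equivalent to the original ones on your polyhedron, and the rational point you pick may violate \eqref{eq: IntOpt}.

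For example, suppose player~$i$ has $0\le\xcon_i\le 1$ and $\xint_i\in\Z_+$, and the coefficient of $\xint_i$ in $\pi_i$ is $\xcon_j-\frac12$. Then the only generator is $\xint_i=0$. Yet at any point with $\xcon_j<\frac12$, player~$i$ lowers her cost without bound by increasing $\xint_i$.

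The fix is immediate. For every $i$ and $B$, also impose the finitely many rational linear inequalities $g_B(\xcon_{-i})^\top r\ge 0$, $r\in R_B$. They hold at $\xcon^*$ by the finiteness you already observed. With them the replacement is exact, and the rest of your argument goes through.
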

\begin{proof}
  We first show two claims stating that \eqref{eq: LCP} and \eqref{eq:
    IntOpt} both have solution sets that are union of rational
  polyhedra, i.e., polyhedra with rational vertices.
  \begin{claim}
    The set of solutions of \eqref{eq: LCP} is a union of
    rational polyhedra.
  \end{claim}
  \begin{proofClaim}
    To see this, note that the set of solutions of \eqref{eq: LCP} is
    the union over all possible supports~$J^i\subseteq
    \set{1,\ldots,m_i}$, $L_i\subseteq \{1,\ldots,l_i\}$, $i\in N$ of solutions to
    \begin{align*}
      \sum_{j\neq i} \big(Q_{ij}^\con\big)^\top (\xcon_j,\xint^*_j)  + d_i^\con-
      \big( A_{ii}^\con \big)^\top \lambda_i \geq 0
      & \quad\text{ for all } i \in N,\\
      \Big(\sum_{j\neq i} \big(Q_{ij}^\con\big)^\top (\xcon_j,\xint^*_j)  + d_i^\con-
      \big( A_{ii}^\con \big)^\top \lambda_i\Big)_l = 0
      & \quad\text{ for all }l \in L^i \text{ and } i \in N,
      \\
      \xcon_{il} = 0
      &\quad\text{ for all } l\notin L^i \text{ and }i \in N,
      \\
      A_{ii}(\xcon_i,\xint^*_i) - b_i \geq 0
      &\quad\text{ for all } i \in N,
      \\
      \Big(A_{ii}(\xcon_i,\xint^*_i) - b_i\Big)_j = 0
      &\quad\text{ for all } k \in J^i \text{ and } i \in N,
      \\
      \lambda_{ij} = 0
      &\quad\text{ for all } j\notin J^i \text{ and }i \in N,
      \\
      \xcon_i\in \R^{l_i}_+, \ \lambda_i\in \R^{m_i}_+
      &\quad\text{ for all } i \in N,
    \end{align*}
    see Page~144 of \textcite{Cottle_et_al:2009} as well.
    The above system is linear in $\xcon$ and $\lambda$ and hence
    describes a polyhedron. Since all appearing parameters are
    rational, it follows that any vertex of the polyhedron is rational
    as well.
  \end{proofClaim}

  \begin{claim}\label{claim: SSetIntOpt}
    The set of solutions of \eqref{eq: IntOpt} is a union of
    rational polyhedra.
  \end{claim}
  \begin{proofClaim}
    We start by noting that \eqref{eq: IntOpt} is a system of
    inequalities in which both sides are expressed via a piecewise
    affine function, where every piece is defined via rational
    parameters. This is true as $\phi_i(\xcon_{-i};\xint^*)$ is the
    optimal-value function of a rational LP in which the objective
    function is linearly perturbed via $\xcon_{-i}$;
    see Lemma~\ref{lemma:opt-val-func-of-lp} in the appendix.
    Such a system is known to admit a solution set describable as
    union of rational polyhedra; cf.~\Cref{lem: SystemPiecewise} in
    the appendix.
  \end{proofClaim}

  Hence, the solution set of the system given by~\eqref{eq: LCP}
  and~\eqref{eq: IntOpt} is the intersection of unions of rational
  polyhedra, i.e.,
  \begin{equation*}
    \left(\bigcup_{l=1}^{L_1} \mathrm{P}_l^{\mathrm{LCP}}\right) \cap
    \left(\bigcup_{l=1}^{L_2} \mathrm{P}_l^{\Phi}\right) =
    \bigcup_{l_1\in L_1,l_2\in L_2} \left( \mathrm{P}_{l_1}^{\mathrm{LCP}}
      \cap  \mathrm{P}_{l_2}^{\Phi} \right).
  \end{equation*}
  Using the second representation of the solution set, we can deduce
  that if there exists a solution to \eqref{eq: LCP} and \eqref{eq: IntOpt},
  then there need to exist at least one $l_1 \in L_1$ and $l_2\in L_2$
  with $\mathrm{P}_{l_1}^{\mathrm{LCP}} \cap  \mathrm{P}_{l_2}^{\Phi}$
  not being empty. Since both of these polyhedra are rational, the
  intersection is again a rational polyhedron.
  Moreover, we just argued that this polyhedron is non-empty.
  Hence, it admits a rational vertex, which solves~\eqref{eq: LCP}
  and~\eqref{eq: IntOpt}.
\end{proof}

Using the above results, we can now easily state and prove the main
result of this section.
\begin{theorem}\label{thm:pos}
  If $G$ admits a Nash equilibrium, it also admits a rational Nash
  equilibrium.
\end{theorem}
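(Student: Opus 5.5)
The plan is to combine \Cref{lem: NEChara} with \Cref{lem: RationalSol}, using the fact that the integer part of any equilibrium is automatically rational. Let $x^* = (\xcon^*,\xint^*)$ be a Nash equilibrium of $G$. Its integer components $\xint^*$ lie in $\Z^{k_i}$ for each player and are therefore rational. Only the continuous part $\xcon^*$ needs to be replaced by a rational vector, and the integer part $\xint^*$ is kept fixed.

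First, apply the ``$\Rightarrow$'' direction of \Cref{lem: NEChara}. Since $x^*$ is an NE, $\xcon^*$ solves \eqref{eq: LCP} and \eqref{eq: IntOpt} simultaneously, together with suitable multipliers $\lambda^*_i$, $i \in N$. So the system given by \eqref{eq: LCP} and \eqref{eq: IntOpt}, with $\xint^*$ fixed, is feasible.

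Second, \Cref{lem: RationalSol} then yields a rational solution $(\hat\xcon,\hat\lambda)$ of this system. Its data are all rational: the matrices $Q_{ij}$ and $A_{ii}$, the vectors $d_i$ and $b_i$, and the fixed integer vector $\xint^*$. These are exactly the rationality hypotheses under which the polyhedral pieces in the proof of \Cref{lem: RationalSol} have rational vertices. I will check explicitly that $\xint^*$ enters the description of these polyhedra only as rational right-hand-side data. This also covers the dependence of the optimal-value functions $\Vcon_i(\cdot;\xint^*)$ and $\Vcon_i(\cdot;(\xint_i,\xint^*_{-i}))$ on the integer parameters.

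Third, apply the ``$\Leftarrow$'' direction of \Cref{lem: NEChara} to the profile $\hat x \define (\hat\xcon,\xint^*)$. Since $\hat\xcon$ solves \eqref{eq: LCP} and \eqref{eq: IntOpt}, $\hat x$ is an NE. It is rational because $\hat\xcon$ is rational and $\xint^*$ is integral.

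There is no real obstacle here, since all the substantive work sits in the two lemmas. The only point needing care is that \Cref{lem: RationalSol} is applied with the same fixed $\xint^*$ throughout. It must also be noted that \eqref{eq: IntOpt} ranges over possibly infinitely many $\xint_i \in \XINT_i$. The union-of-polyhedra argument in \Cref{claim: SSetIntOpt} relies on the appendix lemma to handle this. I take it as given, since it is stated earlier in the paper.
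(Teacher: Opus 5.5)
Your proposal is correct and is exactly the paper's proof: the ``$\Rightarrow$'' direction of \Cref{lem: NEChara}, then \Cref{lem: RationalSol} to obtain a rational $\xcon'$ for the same fixed integral $\xint^*$, then the ``$\Leftarrow$'' direction to conclude that $(\xcon',\xint^*)$ is a rational NE. One correction to your side remark: \Cref{lem: SystemPiecewise} covers only finitely many inequalities, so it does not by itself handle an infinite $\XINT_i$. One would first merge the family into the single inequality $\Vcon_i(\xcon_{-i};\xint^*)\leq\inf_{\xint_i\in\XINT_i}\Vcon_i(\xcon_{-i};(\xint_i,\xint^*_{-i}))$. Its right-hand side is a mixed-integer LP value, and by Meyer's theorem this equals an LP value over a rational polyhedron. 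That issue concerns \Cref{lem: RationalSol}, which you and the paper both take as given, not this theorem's proof.
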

\begin{proof}
  Let $x^*=(\xcon^*,\xint^*)$ be an NE of $G$.
  By \Cref{lem: NEChara}, $\xcon^*$ solves
  \eqref{eq: LCP} and \eqref{eq: IntOpt}.
  Hence, \Cref{lem: RationalSol} implies that there exist a rational
  $\xcon'$ solving \eqref{eq: LCP} and \eqref{eq:
    IntOpt}. Applying \Cref{lem: NEChara} shows that
  $(\xcon',\xint^*)$ is an NE as well.
\end{proof}

\begin{remark}[Finite-Time Algorithm]\label{rem: FiniteTime}
  Assume that there exists only finitely many different
  feasible integer components $\xint$, i.e., integer components
  $\xint$ for which a corresponding feasible strategy profile
  $x =(\xcon,\xint)$ exists. Then, our proof of the above theorem
  directly leads to a finite-time algorithm that can decide
  whether the game admits an equilibrium and if so, outputs a (rational) one.
  The algorithm iterates over all possible feasible integer components $\xint^*$
  and either finds a rational vector solving the corresponding
  system composed of \eqref{eq: LCP} and \eqref{eq: IntOpt} or determines
  that the system does not admit any solution.
  Note that this is possible in finite time as the solution set to
  this system is the union of solution sets of LCPs, which in turn can
  be solved in finite time.
  For the former claim, note that by \Cref{claim: SSetIntOpt}, there
  exist rational polyhedra $P_l$, $l=1,\ldots,L$, describing the
  solution set to \eqref{eq: IntOpt} and that  these polyhedra can be
  determined in finite time; see \Cref{lem: 1Piecewise,lem:
    SystemPiecewise,lemma:opt-val-func-of-lp} for their construction.
  Hence, solving for every $l=1,\ldots,L$ the LCP composed of
  \eqref{eq: LCP} augmented with the condition that $\xcon \in P_l$
  yields the entire solution set to the system composed of \eqref{eq:
    LCP} and \eqref{eq: IntOpt}.
\end{remark}

%%% Local Variables:
%%% mode: latex
%%% TeX-master: "RationalEQ"
%%% End:

\section{Negative Result for Player-Quadratic NEPs}\label{sec:counter-nep}

We now give an example for a player-quadratic NEP with 3~players that
admits a unique Nash equilibrium having irrational continuous strategies.
Note that in this as well as the following section, we do not represent
the problem of a player in the form of \eqref{opt: player} but also
allow for negative variables.
Yet, it can be easily shown that they can be brought into the form of
\eqref{opt: player} while keeping the property of admitting a single
equilibrium having irrational continuous strategies.

The first player solves
\begin{equation}
  \tag{$\mathcal{P}_1(x_2,x_3)$}
  \label{model:NEP1}
  \begin{split}
    \min_{\xcon_1,\xint_1}\quad
    &\pi_1(\xcon_1,\xint_1;\xcon_2,\xint_2,\xcon_3,\xint_3) \coloneq
    \xcon_1^2 - \xcon_3\xcon_1 + \xcon_3\xint_1  -
    (8\xint_3-1)\xint_1
    \\
    \st \quad
    &-5\xint_1 \leq \xcon_1 \leq 5\xint_1, \\
    &\xcon_1 \in \mathbb R, \ \xint_1 \in \{0,1\}.
  \end{split}
\end{equation}
The second player solves
\begin{equation}
  \tag{$\mathcal{P}_2(x_1,x_3)$}
  \label{model:NEP1}
  \begin{split}
    \min_{\xcon_2,\xint_2}\quad
    &\pi_2(\xcon_2,\xint_2;\xcon_1,\xint_1,\xcon_3,\xint_3) \coloneq
    \xcon_2^2 - \xcon_3\xcon_2 + \xcon_3\xint_2 +(30\xint_3
    +1)\xint_2
    \\
    \st  \quad
    &-5\xint_2 \leq \xcon_2 \leq 5\xint_2, \\
    &\xcon_2 \in \mathbb R, \ \xint_2 \in \{0,1\}.
  \end{split}
\end{equation}
Finally, the third player solves
\begin{equation}
  \tag{$\mathcal{P}_3(x_1,x_2)$}
  \label{model:P3}
  \begin{split}
    \min_{\xcon_3,\xint_3}\quad
    &\pi_3(\xcon_3,\xint_3;\xcon_1,\xint_1,\xcon_2,\xint_2)\coloneq
    \left(\xint_1-\xint_2-\frac{1}{2}\right)\xint_3
    \\
    \st  \quad
    &\xcon_3 \in [-5,4], \ \xint_3 \in \{0,1\}.
  \end{split}
\end{equation}
We first make the following three observations.
\begin{claim}\label{claim: xint_3^*=0}
  If $x^*$ is an equilibrium, then $\xint_3^*=0$ holds.
\end{claim}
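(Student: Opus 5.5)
The plan is a proof by contradiction. Assume $x^*$ is an equilibrium with $\xint_3^*=1$. We first determine what players~1 and~2 must do in that case, and then show that player~3 would strictly gain by switching to $\xint_3=0$. Throughout, $\xcon_3^*$ is treated as an arbitrary value in $[-5,4]$, since player~3's cost does not depend on it.

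\emph{Player~1 must choose $\xint_1^*=1$.} With $\xint_3^*=1$, choosing $\xint_1=0$ forces $\xcon_1=0$, which gives cost~$0$. Choosing $\xint_1=1$ gives the cost $\xcon_1^2-\xcon_3^*\xcon_1+\xcon_3^*-7$ on $[-5,5]$. Its unconstrained minimiser $\xcon_1=\xcon_3^*/2$ lies in $[-5/2,2]$ and is therefore feasible. The resulting optimal value is
\begin{equation*}
  -\tfrac{(\xcon_3^*)^2}{4}+\xcon_3^*-7 \;=\; -\tfrac{(\xcon_3^*-2)^2}{4}-6 \;\le\; -6 \;<\; 0 .
\end{equation*}
So $\xint_1=1$ is strictly better, and every best response of player~1 has $\xint_1^*=1$.

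\emph{Player~2 must choose $\xint_2^*=0$.} Again $\xint_2=0$ yields cost~$0$. With $\xint_2=1$, the cost is $\xcon_2^2-\xcon_3^*\xcon_2+\xcon_3^*+31$. For every feasible $\xcon_2$ this is at least its unconstrained minimum $-\tfrac{(\xcon_3^*-2)^2}{4}+32$. On $[-5,4]$ this lower bound is smallest at $\xcon_3^*=-5$, where it equals $32-\tfrac{49}{4}>0$. Hence $\xint_2=1$ is strictly worse, and $\xint_2^*=0$.

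\emph{Player~3 deviates.} With $\xint_1^*=1$ and $\xint_2^*=0$, player~3's cost at $\xint_3=1$ is $\xint_1^*-\xint_2^*-\tfrac12=\tfrac12$. Switching to $\xint_3=0$ gives cost~$0$, which is strictly smaller. This contradicts $x^*$ being an equilibrium, so $\xint_3^*=0$.

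The only step needing care is the bound for player~2. There we must check that the worst case over $\xcon_3\in[-5,4]$ still leaves a strictly positive cost, and the lower bound $-(\xcon_3-2)^2/4+32$ settles this cleanly. For player~1, it is important to verify that the minimiser $\xcon_3^*/2$ respects the box $[-5,5]$. Everything else is direct evaluation.
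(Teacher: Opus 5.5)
Your proof is correct and rests on the same three facts as the paper's: when $z_3^*=1$, player~1 strictly prefers $z_1=1$, player~2 strictly prefers $z_2=0$, and $(z_1^*,z_2^*)=(1,0)$ is incompatible with player~3 choosing $z_3=1$. The only difference is the order. The paper starts from player~3's optimality to reduce to the cases $z_1^*=0$ or $z_1^*=z_2^*=1$, and refutes each by a deviation of player~1 or player~2, using cruder bounds ($y_3-7\le -3$ and $0-25-5+31\ge 1$); you instead fix $(z_1^*,z_2^*)=(1,0)$ first and then let player~3 deviate.
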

\begin{proofClaim}
  Assume for the sake of a contradiction that $\xint_3^*=1$.
  Then, optimality of player~3 implies
  $\xint_1^*-\xint_2^*\leq\nicefrac{1}{2}$ and, hence, either
  $\xint_1^* = \xint_2^* = 1$ or $\xint^*_1 = 0$.
  Let us consider the first case.
  If $\xint_1^* = \xint_2^* = 1$, then player $2$ does not play
  optimally.
  Indeed, her costs for $\xint_2=0=\xcon_2$ are $0$ whereas for
  $\xint_2^* = 1$, we can bound her costs from below by
  \begin{align*}
    \pi_2(\xcon_2,\xint_2^*;\xcon_1,\xint_1^*,\xcon_3,\xint_3^*)
    =\xcon_2^2 - \xcon_3\xcon_2 + \xcon_3  +(30 +1) \geq
    0-5\cdot5-5+31 \geq 1>0.
  \end{align*}
  for any $\xcon_2$, where we used $\xcon_2\in[-5,5]$ and
  $\xcon_3\in[-5,4]$.

  Now consider the case $\xint^*_1 = 0$.
  Player 1's constraint then implies $\xcon_1^*=0$ and, hence,
  $\pi_1(x^*) = 0$.
  This is not optimal for player 1 as for $\xcon_1 = 0$ and
  $\xint_1=1$, we have $\pi_1(x_1,x^*_{-1}) = \xcon_3 -(8-1) \leq
  -3<0$  by $\xcon_3\leq 4$.
\end{proofClaim}
\begin{claim}
  \label{claim: x_2^*=0}
  If $x^*$ is an equilibrium, then $x^*_2 =(\xcon_2^*,\xint_2^*)=0$ holds.
\end{claim}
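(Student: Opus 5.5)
We show $\xint_2^*=0$, after which player~2's constraint $-5\xint_2\le\xcon_2\le 5\xint_2$ forces $\xcon_2^*=0$. The argument uses \Cref{claim: xint_3^*=0}, which gives $\xint_3^*=0$ at any equilibrium. With $\xint_3^*=0$, player~2's cost becomes
\begin{equation*}
  \pi_2(\xcon_2,\xint_2;x^*_{-2})=\xcon_2^2-\xcon_3^*\xcon_2+\xcon_3^*\xint_2+\xint_2 .
\end{equation*}
We compare the best cost she can obtain with $\xint_2=1$ against the cost of the strategy $(\xcon_2,\xint_2)=(0,0)$, which is $0$.

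For $\xint_2=1$, we minimize $\xcon_2^2-\xcon_3^*\xcon_2$ over $\xcon_2\in[-5,5]$. Because $\xcon_3^*\in[-5,4]$, the unconstrained minimizer $\xcon_2=\xcon_3^*/2$ lies in $[-5,5]$. The resulting minimal cost is therefore
\begin{equation*}
  -\frac{(\xcon_3^*)^2}{4}+\xcon_3^*+1=-\Bigl(\frac{\xcon_3^*}{2}-1\Bigr)^2+2 .
\end{equation*}
This is not automatically nonnegative, so the price $+1$ alone does not rule out $\xint_2=1$. We need more information about $\xcon_3^*$.

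Player~3's cost does not depend on $\xcon_3$, so $\xcon_3^*$ is arbitrary in $[-5,4]$. Hence we cannot show $\xint_2^*=0$ from player~2's problem alone, and the contradiction has to come from player~3 instead. Suppose $\xint_2^*=1$. Player~3's cost coefficient on $\xint_3$ is $\xint_1^*-\xint_2^*-\tfrac12=\xint_1^*-\tfrac32<0$ for either value of $\xint_1^*\in\{0,1\}$. So player~3 strictly prefers $\xint_3=1$, and her cost at $\xint_3^*=0$ is not optimal. This contradicts \Cref{claim: xint_3^*=0}, so $\xint_2^*=0$.

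There is therefore no real obstacle. The apparent difficulty, that player~2 might profit from $\xint_2=1$ for some values of $\xcon_3^*$, is avoided by arguing through player~3's incentive rather than through player~2's cost comparison.
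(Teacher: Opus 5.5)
Your proof is correct and follows the paper's route. Assuming $\xint_2^*=1$ makes player~3's coefficient $\xint_1^*-\xint_2^*-\tfrac12\le-\tfrac12$, so she would choose $\xint_3=1$, contradicting Claim~\ref{claim: xint_3^*=0}; player~2's constraint then forces $\xcon_2^*=0$. Your computation of player~2's best cost when $\xint_2=1$ is accurate but not needed, and can be dropped.
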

\begin{proofClaim}
  Assume for the sake of a contraction that $x^*_2\neq 0$. By the
  constraint of player 2, this implies that $\xint_2^*=1$.
  Considering the factor multiplied with $\xint_3$ in the objective of
  player 3, $(\xint_1^*-\xint_2^*-0.5)\leq -0.5$
  gives that $\xint_1^* \leq 1$ and hence the optimality of player 3 implies
  $\xint_3^* = 1$ contradicting \Cref{claim: xint_3^*=0}.
\end{proofClaim}

\begin{claim}
  \label{claim: x^*_1neq0}
  If $x^*$ is an equilibrium, then $x^*_1=(\xcon_1^*,\xint_1^*)\neq 0$ holds.
\end{claim}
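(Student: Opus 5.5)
We argue by contradiction, in the same style as \Cref{claim: xint_3^*=0,claim: x_2^*=0}. Suppose $x^*$ is an equilibrium with $x_1^* = (\xcon_1^*,\xint_1^*) = 0$. By \Cref{claim: xint_3^*=0} we already know $\xint_3^* = 0$, and by \Cref{claim: x_2^*=0} we know $\xint_2^* = 0$. The plan is to use these facts to pin down player~3's optimal $\xint_3$ and thereby contradict \Cref{claim: xint_3^*=0}.

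First we substitute into player~3's objective. With $\xint_1^* = \xint_2^* = 0$, the factor multiplying $\xint_3$ in $\pi_3$ is
\begin{equation*}
  \xint_1^* - \xint_2^* - \tfrac{1}{2} = -\tfrac{1}{2} < 0 .
\end{equation*}
Player~3's feasible set is a product $[-5,4]\times\{0,1\}$ that does not depend on the rivals. Her cost therefore equals $-\tfrac12$ for $\xint_3 = 1$ and $0$ for $\xint_3 = 0$, independently of $\xcon_3$. Hence every optimal response of player~3 has $\xint_3 = 1$. Equilibrium optimality then forces $\xint_3^* = 1$, which contradicts \Cref{claim: xint_3^*=0}.

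The argument has essentially no obstacle. The only point to check is that \Cref{claim: x_2^*=0} indeed gives $\xint_2^* = 0$, which it does directly, since it asserts $x_2^* = 0$. There is also no circularity: both earlier claims were proved for an arbitrary equilibrium without assuming anything about $x_1^*$. An alternative route would be to deviate for player~1, as in the proof of \Cref{claim: xint_3^*=0}. With $\xint_3^* = 0$, choosing $\xint_1 = 1$ and $\xcon_1 = \xcon_3^*/2$ yields cost $-(\xcon_3^*)^2/4 + \xcon_3^* + 1$, and this is negative only for some values of $\xcon_3^*$. The player-3 argument is cleaner, so I would use it.
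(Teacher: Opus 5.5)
Your proof is correct and is essentially the paper's own argument. Assuming $x_1^*=0$, the preceding claim gives $x_2^*=0$, so player~3's coefficient on $\xint_3$ is $-\tfrac{1}{2}<0$; this forces $\xint_3^*=1$ and contradicts the first claim that $\xint_3^*=0$.
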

\begin{proofClaim}
  Assume for the sake of a contraction that $x_1^* =0$ holds. By
  \Cref{claim: x_2^*=0}, we have $x_2^*=0$ as well. But this implies
  by optimality of player 3 that $\xint_3^* = 1$ in contradiction to
  \Cref{claim: xint_3^*=0}.
\end{proofClaim}

We show in the following that only one NE exists, which has an
irrational component.
To this end, we compute the best-response maps $\BR_i(x_{-i}^*)$
and show that the fixed point condition $x^*\in
\BR(x^*)\coloneq\times_{i\in N}\BR_i(x^*_{-i})$ is fulfilled by a
single $x^*$ admitting at least one irrational component.
By the above claims, we already know that $\xint_3^* = 0$ holds in an
NE.
Moreover, $\xcon_3$ does not have an impact on the costs of player
$3$ and hence  $x^*$ fulfills the fixed-point condition
if and only if $x^*_i \in \BR_i(x^*_{-i})$ for $i=1,2$ and $\xint_3^* =0$.
Hence, it is enough to calculate the best response maps
$\BR_1(x_2,\xcon_3,0)$ and $\BR_2(x_1,\xcon_3,0)$.

\begin{claim}
  \label{claim: BR}
  The best response map of player $1$ and $2$ fulfill
  $\BR_1(x_2,\xcon_3,0) = \BR_2(x_1,\xcon_3,0) = \BR(\xcon_3,0)$ for
  all $x_1,x_2,\xcon_3$ with
  \begin{align*}
    \BR(\xcon_3,0)=
    \begin{cases}
      (\nicefrac{\xcon_3}{2},1),
      &\text{ if } \xcon_3 \in\R\setminus
      \left[2-2\sqrt{2},2+2\sqrt{2}\right]
      \\
      \{(\nicefrac{\xcon_3}{2},1)\}\cup\{(0,0)\},
      &\text{ if } \xcon_3 \in \{ 2-2\sqrt{2},2+2\sqrt{2} \}
      \\
      (0,0),
      &\text{ if } \xcon_3 \in
      \left]2-2\sqrt{2},2+2\sqrt{2}\right[.
    \end{cases}
  \end{align*}
\end{claim}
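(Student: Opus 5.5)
With $\xint_3=0$ fixed, player~1's objective becomes $\xcon_1^2-\xcon_3\xcon_1+\xcon_3\xint_1+\xint_1$. Player~2's objective becomes $\xcon_2^2-\xcon_3\xcon_2+\xcon_3\xint_2+\xint_2$. These coincide as functions of a player's own $(\xcon,\xint)$ and of $\xcon_3$. Neither depends on $x_1$ or $x_2$, and the constraints $-5\xint\le\xcon\le5\xint$, $\xint\in\{0,1\}$ are also identical. So both best-response maps equal the argmin set of one parametric problem in $(\xcon,\xint)$ with parameter $\xcon_3$. It therefore suffices to compute that argmin, which I will call $\BR(\xcon_3,0)$.

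I would split according to the binary variable.

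\emph{Case $\xint=0$.} The constraint forces $\xcon=0$, so the only candidate is $(0,0)$, with cost $0$.

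\emph{Case $\xint=1$.} We minimise the strictly convex quadratic $\xcon^2-\xcon_3\xcon+\xcon_3+1$ over $\xcon\in[-5,5]$. The unconstrained minimiser is $\xcon=\xcon_3/2$. Since $\xcon_3$ is player~3's strategy, the relevant range is $\xcon_3\in[-5,4]$, and there $\xcon_3/2\in[-5/2,2]\subseteq[-5,5]$. Hence the box constraint is inactive, the unique minimiser is $\xcon_3/2$, and its value is $-\xcon_3^2/4+\xcon_3+1$.

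Comparing the two cases, $(\xcon_3/2,1)$ is strictly better than $(0,0)$ iff $-\xcon_3^2/4+\xcon_3+1<0$, i.e., iff $\xcon_3^2-4\xcon_3-4>0$. The roots of $\xcon_3^2-4\xcon_3-4$ are $2\pm2\sqrt2$, which yields exactly the three cases of the claim.
\begin{itemize}
\item Outside the closed interval $[2-2\sqrt2,2+2\sqrt2]$, $(\xcon_3/2,1)$ is the unique best response.
\item At the two roots, both candidates tie and both are best responses.
\item In the open interval, $(0,0)$ is the unique best response.
\end{itemize}

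The only delicate point is the claim that $\xcon_3/2$ lies in $[-5,5]$. The statement says ``for all $\xcon_3$'' and writes $\xcon_3\in\R\setminus[\cdot]$, but for $|\xcon_3|>10$ the box constraint would bind. I would therefore state explicitly that $\xcon_3$ ranges over player~3's feasible set $[-5,4]$, where the formula is exact. This is all that is needed for the subsequent fixed-point analysis.
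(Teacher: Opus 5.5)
Your proof is correct and follows the paper's argument essentially step for step. You reduce to a single parametric problem, since players~1 and~2 face identical problems once $\xint_3=0$. You then compare the cost $0$ of $(0,0)$ with the value $-\xcon_3^2/4+\xcon_3+1$ at the interior minimizer $\xcon_3/2$, and read off the roots $2\pm2\sqrt2$. Your remark that the formula requires $\xcon_3\in[-5,4]$, so that $\xcon_3/2\in[-5,5]$, is exactly the restriction the paper uses implicitly, and you are right that the ``for all $\xcon_3$'' in the statement should be read over player~3's feasible set.
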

\begin{proofClaim}
  Observe that it is enough to show that the best response mapping
  of player~1 is of the claimed form as player~2 solves the exact
  same problem as player 1 for $\xint_3 =0$. Let us consider player
  1 in the following.
  For $\xint_1$, player $1$ has two options: If he plays
  $\xint_1=0$, then $\xcon_1=0$ and he gets costs of $0$.
  Let us calculate in the following the optimal $\xcon_1$ for
  $\xint_1 = 1$ and the corresponding objective value.
  For $\xint_1=1$, the optimal $\xcon_1$ solves
  \begin{align*}
    \min_{\xcon_1}\quad
    &\pi_1(\xcon_1,1;\xcon_2,\xint_2,\xcon_3,0) = \xcon_1^2 -
      \xcon_3\xcon_1 + \xcon_3 + 1
    \\
    \st  \quad
    &-5 \leq \xcon_1 \leq 5, \\
    &\xcon_1 \in \mathbb R.
  \end{align*}
  The objective is an upward opening parabola and its minimal point
  is at $\xcon_1 = \nicefrac{\xcon_3}{2}$ which is contained in
  $[-5,5]$ by $\xcon_3 \in [-5,4]$. The corresponding objective
  value is given by
  \begin{align*}
    \pi_1(\nicefrac{\xcon_3}{2},1,x_2,\xcon_3,0)
    = \frac{1}{4}\xcon_3^2 - \frac{1}{2}\xcon_3^2 + \xcon_3 +1
    = -\frac{1}{4} \xcon_3^2 +\xcon_3 +1 .
  \end{align*}
  Hence, player 1 plays $\xint_1=1$ and $\xcon_1 =
  \nicefrac{\xcon_3}{2}$ in a best-response if and only if the above
  value is smaller or equal to $0$, which are the costs when playing
  $\xint_1=0$.
  Now $\pi_1(\nicefrac{\xcon_3}{2},1,x_2,\xcon_3,0)$ is a downwards
  opening parabola and thus is exactly 0 at the roots and below zero
  in case that $\xcon_3$ is outside of the interval between the roots
  of the parabola.
  The roots are
  \begin{align*}
    \frac{-1\pm
    \sqrt{1-4\cdot\frac{-1}{4}\cdot1}}{2\cdot\frac{-1}{4}}
    = 2 \pm 2\sqrt{2}
  \end{align*}
  which results in the claimed best-response sets.
\end{proofClaim}

We are now in the position to show that there is exactly one
irrational equilibrium.
Assume that $x^*$ is an equilibrium.
We know $x^*_2=0$ by \Cref{claim: x_2^*=0} and \Cref{claim: BR} for
player $2$ implies that $\xcon_3^*\in [2-2\sqrt{2},2+2\sqrt{2}]$ has
to hold.
Using \Cref{claim: x^*_1neq0} and \Cref{claim: BR} for player 1, we
further get $\xcon_3^*\in \R\setminus]2-2\sqrt{2},2+2\sqrt{2}[$.
Thus, we have $\xcon_3^*\in \{2-2\sqrt{2},2+2\sqrt{2}\} \cap
[-5,4] = \{2-2\sqrt{2}\}$ and by \Cref{claim: x^*_1neq0} and
\Cref{claim: BR} for player~1, we further get
$x^*_1=(\nicefrac{\xcon^*_3}{2},1)$.
Hence, we have shown that there is a single candidate for an
equilibrium~$x^*$ with an irrational component.
Moreover, it is now an immediate consequence of
\Cref{claim: BR} (and the argumentation preceding \Cref{claim: BR})
that~$x^*$ is indeed an NE.

%%% Local Variables:
%%% mode: latex
%%% TeX-master: "RationalEQ"
%%% End:
\section{Negative Result for Player-Quadratic and Player-Linear GNEPs}\label{sec:counterGNEP}

The following example describes a rational mixed-integer GNEP with
player-linear costs that admits a unique GNE having irrational
components.
As the class of rational mixed-integer GNEPs with player-linear costs
is a sub-class of rational mixed-integer GNEPs with player-quadratic
costs, the example also proves that the latter class does not
necessarily admit a rational equilibrium if an equilibrium exists.

\begin{example}\label{ex:GNEP}
  We study the Nash equilibria of the following 3-player game.
  Player~1 solves the problem
  \begin{align}
      \min_{\xcon_1,\xint_1}\quad
      &\pi_1(\xcon_1,\xint_1;\xcon_2,\xint_2,\xint_3)=
      -(2\xcon_2+4\xint_2)\xcon_1
      +\Big(1-3\xcon_2+\frac{14}{5}\xint_2-2\xint_3\Big) \xint_1
      \nonumber \\
      \st  \quad
      &0 \leq \xcon_1 \leq \xint_1, \nonumber \\
      &\xcon_1+\xcon_2 \leq 1,
        \tag{$\mathcal{P}_1(x_2,\xint_3)$}\label{model:P1}\\
      &\xcon_1 \in \mathbb R, \xint_1 \in \{0,1\},\nonumber
  \end{align}
  player 2 solves
  \begin{align}
      \min_{\xcon_2,\xint_2}\quad
      &\pi_2(\xcon_2,\xint_2;\xcon_1,\xint_1,\xint_3)=
      (2\xcon_1-4\xint_1)\xcon_2 +
      \Big(1-3\xcon_1+\frac{11}{5}\xint_1-2\xint_3\Big)\xint_2
      \nonumber\\
      \st  \quad
      &0 \leq \xcon_2 \leq \xint_2, \nonumber\\
      &\xcon_1+\xcon_2 \leq 1,
      \tag{$\mathcal{P}_2(x_1,\xint_3)$}\label{model:P2}\\
      &\xcon_2 \in \mathbb R, \xint_2 \in \{0,1\},\nonumber
  \end{align}
  and player 3 faces the problem
  \begin{equation}
    \tag{$\mathcal{P}_3(x_1,x_2)$}
    \label{model:P3}
    \begin{split}
      \min_{\xint_3}\quad
      &\pi_3(\xint_3;\xint_1,\xint_2)=\left(\xint_1+\xint_2-\frac{1}{2}\right)\xint_3
      \\
      \st  \quad
      &\xint_3 \in \{0,1\}.
    \end{split}
  \end{equation}
  Using the following two claims we show that in an
  equilibrium, $\xint_3^*=0$ and $\xint_1^*=\xint_2^* =1$ holds.
  \begin{claim} \label{claim:x3_not_one}
    If $x^*$ is an equilibrium, then $\xint_3^* = 0$.
  \end{claim}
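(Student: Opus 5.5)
The plan is a direct contradiction argument using player~3's optimality together with a profitable unilateral deviation of player~1. The structure mirrors \Cref{claim: xint_3^*=0} in the NEP example, but it is simpler here because the coupling constraint $\xcon_1+\xcon_2\leq 1$ cannot block the deviation.

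Assume for contradiction that $x^*$ is an equilibrium with $\xint_3^*=1$.
\begin{enumerate}
\item Player~3's cost is $\left(\xint_1^*+\xint_2^*-\tfrac12\right)\xint_3$. Since $\xint_3=0$ gives cost $0$, optimality of $\xint_3^*=1$ forces $\xint_1^*+\xint_2^*\leq \tfrac12$. As $\xint_1^*,\xint_2^*\in\{0,1\}$, this gives $\xint_1^*=\xint_2^*=0$.
\item The constraints $0\leq \xcon_i\leq \xint_i$ then yield $\xcon_1^*=\xcon_2^*=0$. Hence player~1's current cost is $\pi_1(x^*)=0$.
\item Consider the deviation $(\xcon_1,\xint_1)=(0,1)$ for player~1. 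It is feasible: $0\leq 0\leq 1$, and $\xcon_1+\xcon_2^*=0\leq 1$.
\item With $\xcon_2^*=\xint_2^*=0$ and $\xint_3^*=1$, its cost is
\[
-(0)\cdot 0+\bigl(1-0+0-2\bigr)\cdot 1=-1<0=\pi_1(x^*).
\]
This contradicts the equilibrium property of $x^*$, so $\xint_3^*=0$.
\end{enumerate}

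The only step needing care is the feasibility check in the generalized setting. Player~1's strategy set depends on $\xcon_2^*$ through the shared constraint $\xcon_1+\xcon_2\leq 1$. Choosing $\xcon_1=0$ makes the deviation feasible regardless of $\xcon_2^*$, so there is no genuine obstacle.
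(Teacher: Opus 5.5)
Your proof is correct and follows the paper's argument. Player~3's optimality forces $\xint_1^*=\xint_2^*=0$, hence $\xcon_1^*=\xcon_2^*=0$ and $\pi_1(x^*)=0$, and the deviation $(\xcon_1,\xint_1)=(0,1)$ then gives player~1 cost $-1$; your explicit check that this deviation satisfies the shared constraint $\xcon_1+\xcon_2\leq 1$ is a detail the paper leaves implicit.
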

  \begin{proofClaim}
    Assume for the sake of a contradiction that $\xint_3^*=1$ holds.
    By the optimality of player~$3$, this implies that
    $\xint_1^*+\xint_2^* \leq 1/2$. This in turn implies that
    $\xint_1^*=\xint_2^*=0$ and by the constraints of player $1$ and
    $2$, we also have $\xcon_1^* = \xcon_2^* = 0$.
    Hence, we get $\pi_1(x^*) =0$.
    This, however, contradicts the optimality condition for player $1$
    as $(\xcon_1,\xint_1)=(0,1)$ yields costs of $\pi_1(0,1;x^*_{-1}) =
    -1<0$, where we used that $x^*_2 = 0$ and $\xint_3^*=1$.
  \end{proofClaim}

  \begin{claim}
    \label{claim:zwnot00}
    If $x^*$ is an NE, then $\xint_1^*=1$ and $\xint_2^*=1$.
  \end{claim}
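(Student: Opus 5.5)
The plan is a direct case analysis over the integer components $(\xint_1^*,\xint_2^*)\in\{0,1\}^2$. By \Cref{claim:x3_not_one} we may fix $\xint_3^*=0$ throughout. I will rule out the three profiles $(0,0)$, $(1,0)$ and $(0,1)$. In each case I exhibit a player with a profitable unilateral deviation that remains feasible under the shared constraint $\xcon_1+\xcon_2\le 1$.

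\emph{Case $\xint_1^*=\xint_2^*=0$.} The constraints $0\le \xcon_i\le \xint_i$ force $\xcon_1^*=\xcon_2^*=0$. Player~3's coefficient is then $\xint_1^*+\xint_2^*-\tfrac12=-\tfrac12<0$. Hence $\xint_3=1$ yields cost $-\tfrac12<0=\pi_3(x^*)$, so player~3 would deviate. This contradicts $\xint_3^*=0$; equivalently, it contradicts \Cref{claim:x3_not_one}.

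\emph{Case $\xint_1^*=1$, $\xint_2^*=0$.} Then $\xcon_2^*=0$. Substituting $\xcon_2^*=\xint_2^*=\xint_3^*=0$ into $\pi_1$ gives $\pi_1(x^*)=-(0)\xcon_1^*+(1-0+0-0)\cdot 1=1$, independently of $\xcon_1^*$. The deviation $(\xcon_1,\xint_1)=(0,0)$ is feasible for player~1, since $\xcon_1+\xcon_2^*=0\le 1$. It yields cost $0<1$, contradicting optimality of player~1.

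\emph{Case $\xint_1^*=0$, $\xint_2^*=1$.} This case is symmetric. Here $\xcon_1^*=0$, so $\pi_2(x^*)=(0-0)\xcon_2^*+(1-0+0-0)\cdot 1=1$. The feasible deviation $(\xcon_2,\xint_2)=(0,0)$ gives cost $0$, contradicting optimality of player~2.

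The only remaining possibility is therefore $\xint_1^*=\xint_2^*=1$, which proves the claim. There is no real obstacle here. The only point needing care is checking that the deviation to $(0,0)$ respects the coupled constraint $\xcon_1+\xcon_2\le 1$, and this holds trivially because the deviating player sets her own continuous variable to zero while the rival's variable already satisfies $\xcon_j\le \xint_j\le 1$.
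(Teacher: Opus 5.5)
Your proof is correct and follows essentially the same argument as the paper. In both, once the rival's integer component is $0$ (which forces the rival's continuous component to $0$) and $\xint_3^*=0$, the player's cost reduces to $\xint_i$, so she must play $x_i^*=0$. Player~3 then profitably deviates to $\xint_3=1$. The paper merely merges your cases $(1,0)$ and $(0,0)$, and likewise $(0,1)$ and $(0,0)$, into a single chain of reasoning under the assumption $\xint_2^*=0$ (resp.\ $\xint_1^*=0$).
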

  \begin{proofClaim}
    We start by arguing that $\xint_2^*=1$ holds.
    Assume for the sake of a contradiction that $\xint_2^*=0$
    holds. By the constraint of player~$2$, we have $\xcon_2^*=0$.
    By \Cref{claim:x3_not_one}, we also have $\xint_3^*=0$.
    Hence, the objective of player~$1$ in this case is
    $\pi_1(x^*) = \xint_1^*$.
    By optimality of player~$1$, we thus get that $x_1^*=0$.
    However, this now contradicts the optimality of player~$3$ as
    $\pi_3^*(\xint_3;0)=-\nicefrac{\xint_3}{2}$ and hence $\xint_3^* = 0$
    is not optimal.

    The proof for $\xint_1^*=1$ works analogously.
    Assume for the sake of a contradiction that $\xint_1^*=0$ holds.
    By the constraint of player~$1$, we have $\xcon_1^*=0$.
    By \Cref{claim:x3_not_one}, we also have $\xint_3^*=0$.
    Hence, the objective of player $2$ in this case is
    $\pi_2(x^*) = \xint_2^*$.
    By optimality of player~$1$, we hence get that $x_2^*=0$.
    However, this now contradicts the optimality of player~$3$ as
    before.
  \end{proofClaim}

  By the above claims, we know that we have in an equilibrium $x^*$
  that $\xint^*_3 =0$ and $\xint_1^*=\xint_2^* = 1$.
  For this case, we now describe the best-response mappings of
  player~$1$ and~$2$.
  \begin{claim}
    For $\xint_2^*=1$ and $\xint^*_3=0$, the best-response mapping of
    player $1$ for any $\xcon_2\in[0,1]$ is given by
    \begin{equation*}
      \BR_1(\xcon_2,1,0)=
      \begin{cases}
        \set{(1-\xcon_2,1)},
        &\text{if } \xcon_2 \in
        \left[0,\frac{1+\sqrt{13/5}}{4}\right[,
        \\
        \set{(1-\xcon_2,1)} \cup \set{(0,0)},
        &\text{if } \xcon_2 = \frac{1+\sqrt{13/5}}{4},
        \\
        \set{(0,0)},
        &\text{if } \xcon_2 \in \left]
          \frac{1+\sqrt{13/5}}{4},1\right].
      \end{cases}
    \end{equation*}
  \end{claim}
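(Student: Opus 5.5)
Fix $\xint_2=1$, $\xint_3=0$ and $\xcon_2\in[0,1]$. Player~1's cost is then
\begin{equation*}
\pi_1=-(2\xcon_2+4)\xcon_1+\Big(1-3\xcon_2+\tfrac{14}{5}\Big)\xint_1 .
\end{equation*}
I would split on the binary variable $\xint_1$, in the same way as in the proof of \Cref{claim: BR}.

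\textbf{Case $\xint_1=0$.} The constraint $0\le\xcon_1\le\xint_1$ forces $\xcon_1=0$, so the cost is exactly $0$.

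\textbf{Case $\xint_1=1$.} The feasible set is $0\le\xcon_1\le\min\{1,1-\xcon_2\}=1-\xcon_2$. The coefficient of $\xcon_1$ is $-(2\xcon_2+4)<0$, so the unique minimizer of this linear problem is $\xcon_1=1-\xcon_2$. Substituting gives the value
\begin{equation*}
f(\xcon_2)\coloneq-(2\xcon_2+4)(1-\xcon_2)+\tfrac{19}{5}-3\xcon_2
=2\xcon_2^2-\xcon_2-\tfrac{1}{5}.
\end{equation*}
Expanding: $-(2\xcon_2+4)(1-\xcon_2)=2\xcon_2^2+2\xcon_2-4$. Adding $\tfrac{19}{5}-3\xcon_2$ gives $2\xcon_2^2-\xcon_2-\tfrac15$.

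\textbf{Comparing the two cases.} The best response is $(1-\xcon_2,1)$ iff $f(\xcon_2)\le 0$, and $(0,0)$ iff $f(\xcon_2)\ge 0$; both occur exactly at roots of $f$. The roots of $2t^2-t-\tfrac15$ are
\begin{equation*}
t=\frac{1\pm\sqrt{1+8/5}}{4}=\frac{1\pm\sqrt{13/5}}{4}.
\end{equation*}
The smaller root is negative, since $\sqrt{13/5}>1$. The larger root lies in $(0,1)$, since $\sqrt{13/5}<3$. Because $f$ is an upward-opening parabola, on $[0,1]$ it is negative below the larger root, zero at it, and positive above it. This yields exactly the three cases of the claimed best-response mapping.

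The argument has no real obstacle. The points that need care are:
\begin{itemize}[leftmargin=*]
\item checking that the shared constraint $\xcon_1+\xcon_2\le1$, and not $\xcon_1\le 1$, is the binding upper bound;
\item confirming strict negativity of the $\xcon_1$ coefficient, so the minimizer is unique;
\item verifying the sign of $f$ on $[0,1]$, including $f(0)=-\tfrac15<0$.
\end{itemize}
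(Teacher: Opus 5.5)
Your proof is correct and matches the paper's argument: you split on $\xint_1$, use the negative coefficient $-(2\xcon_2+4)$ together with the shared constraint to get the unique minimizer $\xcon_1=1-\xcon_2$, and compare the resulting value $2\xcon_2^2-\xcon_2-\frac{1}{5}$ with the cost $0$ of $(0,0)$ via its roots $\frac{1\pm\sqrt{13/5}}{4}$. Your extra checks on the uniqueness of the minimizer and on where the roots lie are exactly what justifies the singleton cases of the claimed mapping.
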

  \begin{proofClaim}
    For $\xint_1$, player $1$ has two options: If she plays
    $\xint_1=0$, then $\xcon_1=0$ and she gets a cost of $0$.
    Let us calculate in the following the optimal $\xcon_1$ for
    $\xint_1 = 1$ and the corresponding objective value.
    For $\xint_1=1$, the optimal $\xcon_1$ solves
    \begin{align*}
      \min_{\xcon_1}\quad
      &\pi_1(\xcon_1,1;\xcon_2,1,0) = -(2\xcon_2+4)\xcon_1
        -3\xcon_2+\frac{19}{5}
      \\
      \st  \quad
      &0 \leq \xcon_1 \leq 1, \quad
        \xcon_1 \leq 1-\xcon_2, \quad
        \xcon_1 \in \mathbb R.
    \end{align*}
    Since $\xcon_2 \geq 0$, the term $-(2\xcon_2+4)$ is negative and,
    thus, the optimal $\xcon_1$ is as large as possible, i.e.,
    $\xcon_1=1-\xcon_2$.
    The corresponding objective value is given by
    \begin{align*}
      \pi_1(1-\xcon_2,1;\xcon_2,1,0) =  2\xcon_2^2-\xcon_2-\frac{1}{5}.
    \end{align*}
    Hence, $\xint_1=1$ and $\xcon_1 = 1-\xcon_2$ is a best-response if
    and only if the above value is smaller or equal to $0$, which are
    the costs when playing $\xint_1=0$.
    The above value is an upwards opening parabola in $\xcon_2$ with
    roots
    \begin{equation*}
      \frac{1-\sqrt{13/5}}{4}<0
      \quad \text{and} \quad
      \frac{1+\sqrt{13/5}}{4}\in[0,1].
    \end{equation*}
    Subsequently, for $\xcon_2
    \in[0,1]$, the cost $\pi_1(1-\xcon_2,1;\xcon_2,1,0)$ is smaller or equal to zero if and only
    if $$\xcon_2 \in \left[0,\frac{1+\sqrt{13/5}}{4}\right],$$ leading
    to the claimed best-response mapping.
  \end{proofClaim}

  A similar proof shows the following claim.
  \begin{claim}
    For $\xint_1^*=1$ and $\xint^*_3=0$, the best-response mapping of
    player $2$ for any $\xcon_1\in[0,1]$ is given by
    \begin{equation*}
      \BR_2(\xcon_1,1,0)=
      \begin{cases}
        \set{(1-\xcon_1,1)},
        &\text{if } \xcon_1 \in\left[0,
          \frac{3-\sqrt{13/5}}{4}\right[,
        \\
        \set{(1-\xcon_1,1)} \cup \set{(0,0)},
        &\text{if } \xcon_1 = \frac{3-\sqrt{13/5}}{4},
        \\
        \set{(0,0)},
        &\text{if } \xcon_1 \in
        \left]\frac{3-\sqrt{13/5}}{4},1\right].
      \end{cases}
    \end{equation*}
  \end{claim}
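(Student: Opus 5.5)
I will follow the proof of the preceding claim for player~1 step by step, now for player~2 with $\xint_1^*=1$ and $\xint_3^*=0$ fixed and $\xcon_1\in[0,1]$ arbitrary.

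\emph{Option $\xint_2=0$.} The constraint $0\le\xcon_2\le\xint_2$ forces $\xcon_2=0$, so this option always costs~$0$.

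\emph{Option $\xint_2=1$.} Player~2 then minimizes
\[
\pi_2(\xcon_2,1;\xcon_1,1,0)=(2\xcon_1-4)\xcon_2+1-3\xcon_1+\tfrac{11}{5}
\]
over $0\le\xcon_2\le 1$ and $\xcon_2\le 1-\xcon_1$. Since $\xcon_1\le 1$, the coefficient $2\xcon_1-4\le -2$ is negative. Hence the unique minimizer is the largest feasible value, $\xcon_2=1-\xcon_1$, which lies in $[0,1]$.

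Substituting this minimizer gives
\[
(2\xcon_1-4)(1-\xcon_1)+\tfrac{16}{5}-3\xcon_1 = -2\xcon_1^2+3\xcon_1-\tfrac45 .
\]
So $\xint_2=1$ (with $\xcon_2=1-\xcon_1$) is a best response if and only if $-2\xcon_1^2+3\xcon_1-\tfrac45\le 0$, which is equivalent to
\[
2\xcon_1^2-3\xcon_1+\tfrac45\ge 0 .
\]
This is an upward-opening parabola with roots
\[
\frac{3\pm\sqrt{9-32/5}}{4}=\frac{3\pm\sqrt{13/5}}{4}.
\]

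It remains to place the roots relative to $[0,1]$. We have $\sqrt{13/5}\approx1.61$, so $\sqrt{13/5}\in(1,3)$. This gives
\[
\frac{3-\sqrt{13/5}}{4}\in\left(0,\tfrac12\right), \qquad \frac{3+\sqrt{13/5}}{4}>1 .
\]
Therefore, on $[0,1]$ the option-$1$ cost is negative for $\xcon_1<\frac{3-\sqrt{13/5}}{4}$, zero at $\xcon_1=\frac{3-\sqrt{13/5}}{4}$, and positive for $\xcon_1>\frac{3-\sqrt{13/5}}{4}$. Comparing with the cost~$0$ of option $\xint_2=0$ yields exactly the three cases of the claimed best-response map, with the tie at the boundary point giving both $(1-\xcon_1,1)$ and $(0,0)$.

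The only real work is getting the constant term and the root placement right, in particular checking that the larger root lies outside $[0,1]$. Unlike for player~1, here the relevant threshold is the smaller root, and the best response switches from $\xint_2=1$ to $\xint_2=0$ as $\xcon_1$ grows past it.
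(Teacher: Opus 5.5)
Your proposal is correct and follows the paper's argument essentially step by step: you compare the cost $0$ of $\xint_2=0$ with the value $-2\xcon_1^2+3\xcon_1-\tfrac{4}{5}$ obtained by pushing $\xcon_2$ to $1-\xcon_1$, and then locate the roots $\frac{3\pm\sqrt{13/5}}{4}$ relative to $[0,1]$. Your version states the comparison correctly in terms of $\xint_2$ and $\xcon_2=1-\xcon_1$, whereas the paper's text has the player indices swapped at that point.
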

  \begin{proofClaim}
    For $\xint_2$, player $2$ has two options: If she plays
    $\xint_2=0$, then $\xcon_2=0$ follows and the cost is $0$.
    Let us now calculate the optimal~$\xcon_2$ for
    $\xint_2 = 1$ and the corresponding objective value.
    For $\xint_2=1$, the optimal $\xcon_2$ solves
    \begin{align*}
      \min_{\xcon_2}\quad
      &\pi_2(\xcon_2,1;\xcon_1,1,0) = (2\xcon_1-4)\xcon_2
        -3\xcon_1+\frac{16}{5}
      \\
      \st \quad
      &0 \leq \xcon_2 \leq 1, \quad
        \xcon_2 \leq 1-\xcon_1, \quad
        \xcon_2 \in \mathbb R.
    \end{align*}
    Since $\xcon_1 \leq 1$, the term $(2\xcon_1-4)$ is negative and
    thus the optimal $\xcon_2$ is as large as possible, i.e.,
    $\xcon_2=1-\xcon_1$.
    The corresponding objective value is given by
    \begin{align*}
      \pi_2(1-\xcon_1,1;\xcon_1,1,0)
      = -2\xcon_1^2+3\xcon_1-\frac{4}{5}.
    \end{align*}
    Hence, $\xint_1=1$ and $\xcon_1 = 1-\xcon_2$ is a best-response if
    and only if the above value is smaller or equal to $0$, which are the costs
    when playing $\xint_1=0$.
    The above is a downwards opening parabola in $\xcon_1$ with roots
    \begin{equation*}
      \frac{3-\sqrt{13/5}}{4} \in [0,1]
      \quad\text{and}\quad
      \frac{3+\sqrt{13/5}}{4}>1.
    \end{equation*}
    Hence, for $\xcon_1\in[0,1]$, the cost is smaller or equal
    to zero if and only if $$\xcon_1 \in
    \left[0,\frac{3-\sqrt{13/5}}{4}\right],$$ leading to the claimed
    best-response mapping.
  \end{proofClaim}

  We can now show that there exists only one equilibrium~$x^*$ and
  that this equilibrium admits at least one irrational component.
  For what follows, assume that $x^*$ is an NE.
  Claims~\ref{claim:x3_not_one} and \ref{claim:zwnot00} imply
  $\xint_3^* = 0$ and $\xint_1^*=\xint_2^* = 1$.
  This together with the best-response maps shown in the last two
  claims leads to
  \begin{equation*}
    \xcon_1^* \in \left[0,\frac{3-\sqrt{13/5}}{4}\right],
    \quad
    \xcon_2^*\in \left[0,\frac{1+\sqrt{13/5}}{4}\right],
    \quad
    \xcon_1^* = 1-\xcon_2^*.
  \end{equation*}
  These conditions can only be satisfied for
  \begin{equation*}
    \xcon_1^* = \frac{3-\sqrt{13/5}}{4}
    \quad\text{and}\quad
    \xcon_2^* = \frac{1+\sqrt{13/5}}{4}.
  \end{equation*}
  Hence, we have argued that there can only exist a single NE
  given by
  \begin{equation*}
    x^*
    = (\xcon_1^*,\xint_1^*,\xcon_2^*,\xint_2^*,\xint_3^*)
    = \left(\frac{3-\sqrt{13/5}}{4},1,\frac{1+\sqrt{13/5}}{4},1,0\right),
  \end{equation*}
  which is irrational.
  One can easily verify that this is indeed an NE using the above
  best-response maps.
\end{example}

%%% Local Variables:
%%% mode: latex
%%% TeX-master: "RationalEQ"
%%% End:
\section{Conclusion}
\label{sec:conclusion}

Let us close this paper with a brief discussion about the
computational consequences of our results.
First, the results of Section~\ref{sec:pos-result-linear-neps}
immediately lead to a finite time algorithm for computing an
equilibrium of player-linear NEPs (that admit equilibria at all);
see Remark~\ref{rem: FiniteTime} for a more detailed discussion.
Second, the existence of instances of player-quadratic NEPs and
player-linear or player-quadratic GNEPs that only admit irrational equilibria makes it
impossible to derive finite-time algorithms in general.

This implication directly leads to the question of approximation
schemes for these classes of problems.
One could, for instance, try to design finite-time algorithms that
compute points that are $\varepsilon$-close to irrational equilibria
if only those exist.
To the best of our knowledge, methods like this have not been studied
in the past and are an interesting topic of future research.

%%% Local Variables:
%%% mode: latex
%%% TeX-master: "RationalEQ"
%%% End:

\section*{Acknowledgements}

This research has been funded by the Deutsche Forschungsgemeinschaft
(DFG) in the project 543678993 (Aggregative gemischt-ganzzahlige
Gleichgewichtsprobleme: Existenz, Approximation und Algorithmen).
We acknowledge the support of the DFG.

%%% Local Variables:
%%% mode: latex
%%% TeX-master: "RationalEQ"
%%% End:

\appendix
\section{Technical Auxiliary Results}

We prove in the following that a system of inequalities with piecewise
affine functions and rational parameters admits a solution set
describable as the union of rational polyhedra. Moreover, we show that
the optimal value function of an LP with respect to change in the objective vector
is a  piecewise affine function with rational parameters. 
Here, we use the following definition:

\begin{definition}
  \label{def: PWA}
  We say that a function~$f:\R^n\to \R$ is a \emph{piecewise affine
    function with rational parameters} if there exists a finite set of
  polyhedra $P_i \coloneq \defset{x\in \R^n}{A_ix \leq b_i}$, $i
  =1,\ldots,k$, with rational $A_i, b_i$ together with rational $c_i\in
  \mathbb{Q}^n$ and $d_i\in \mathbb{Q}$
  such that $\bigcup_{i=1}^k P_i = \R^n$
  and
  \begin{align*}
    f(x) = c_i^\top x + d_i \quad \text{if } x \in P_i.
  \end{align*}
\end{definition}

We first need the following intermediate lemma.
\begin{lemma}
  \label{lem: 1Piecewise}
  Let $f:\R^n\to \R$ be a piecewise affine function with rational
  parameters.
  Then, the set of solutions of $f(x) \leq 0$ is the union of rational
  polyhedra.
\end{lemma}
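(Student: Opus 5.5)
The plan is to decompose the solution set along the pieces of $f$. By \Cref{def: PWA} there are rational polyhedra $P_i = \defset{x\in\R^n}{A_i x\leq b_i}$, $i=1,\dots,k$, covering $\R^n$, together with rational $c_i\in\mathbb{Q}^n$, $d_i\in\mathbb{Q}$ such that $f(x)=c_i^\top x+d_i$ on $P_i$. Set $S\coloneq\defset{x\in\R^n}{f(x)\leq 0}$. I would show that
\begin{equation*}
  S=\bigcup_{i=1}^k \bigl(P_i\cap\defset{x}{c_i^\top x+d_i\leq 0}\bigr).
\end{equation*}

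For ``$\subseteq$'': take $x\in S$. Since the $P_i$ cover $\R^n$, there is an index $i$ with $x\in P_i$. Then $c_i^\top x+d_i=f(x)\leq 0$, so $x$ lies in the $i$-th set on the right. For ``$\supseteq$'': if $x\in P_i$ and $c_i^\top x+d_i\leq 0$, then $f(x)=c_i^\top x+d_i\leq 0$ by the defining property on $P_i$. This uses only that $f$ is a well-defined function, so the affine formulas must agree wherever pieces overlap. The non-disjointness of the $P_i$ therefore causes no trouble.

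Each set $R_i\coloneq P_i\cap\defset{x}{c_i^\top x\leq -d_i}$ is described by the rational system $\begin{pmatrix}A_i\\ c_i^\top\end{pmatrix}x\leq\begin{pmatrix}b_i\\-d_i\end{pmatrix}$. So it is a rational polyhedron in the sense used in the paper, i.e., rational data and hence rational vertices and rational extreme rays whenever it is nonempty. Empty $R_i$ can simply be dropped from the union. This gives the claim.

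The only potential subtlety, and the point I would check carefully, is the meaning of ``rational polyhedron'' as used in \Cref{lem: RationalSol}: there it is needed that every nonempty polyhedron of this kind contains a rational point. Since each $R_i$ is given by rational inequalities, standard LP theory applies. If the polyhedron is pointed, a vertex solves a rational nonsingular subsystem and is therefore rational. Otherwise, one can restrict to the orthogonal complement of the lineality space, which is rational, and obtain a rational point. I would add a sentence noting this, so the lemma meshes with its later use in \Cref{lem: SystemPiecewise}.
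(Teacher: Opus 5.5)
Your proposal is correct and follows the paper's proof: the paper also writes the solution set as $\bigcup_{i=1}^k \bigl(P_i\cap\{x : c_i^\top x+d_i\le 0\}\bigr)$ and notes that each term is an intersection of rational polyhedra. Your two-inclusion check (including the point that overlapping pieces cause no trouble) and your remark on rational points in non-pointed polyhedra are correct details that the paper leaves implicit.
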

\begin{proof}
  The set of solutions of $f(x) \leq 0$ is the union over
  $i\in\{1,\ldots,k\}$ of $P_i$ intersected with $\defset{x\in
  \R^n}{ c_i^\top x + d_i\leq 0}$. The latter intersection is
  again a rational polyhedron as it is the intersection of two
  rational polyhedra.
  Hence, the lemma is proven.
\end{proof}

\begin{lemma}
  \label{lem: SystemPiecewise}
  Consider a finite set of piecewise affine functions
  $f_i$, $i=1,\ldots,k$, with rational parameters and a corresponding
  system of inequalities of the form $f_i(x)\leq 0$. Then, the
  solution set of this system is the union of rational polyhedra.
\end{lemma}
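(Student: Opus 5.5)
By \Cref{lem: 1Piecewise}, each individual inequality $f_i(x)\leq 0$ has a solution set $S_i$ that is a finite union of rational polyhedra. I write it as
\begin{equation*}
  S_i = \bigcup_{l=1}^{L_i} P^i_l,
\end{equation*}
where every $P^i_l$ is of the form $\defset{x\in\R^n}{A^i_l x\leq b^i_l}$ with rational $A^i_l$ and $b^i_l$. The solution set of the whole system is then $S=\bigcap_{i=1}^k S_i$, and what remains is to show that this intersection is again a finite union of rational polyhedra.

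To do this, I distribute the intersection over the unions, as in the proof of \Cref{lem: RationalSol}:
\begin{equation*}
  \bigcap_{i=1}^k \bigcup_{l=1}^{L_i} P^i_l
  = \bigcup_{(l_1,\ldots,l_k)\in \prod_{i=1}^k\{1,\ldots,L_i\}}
  \; \bigcap_{i=1}^k P^i_{l_i}.
\end{equation*}
The identity is checked pointwise. A point $x$ lies in the left-hand side exactly when, for every $i$, there is an index $l_i$ with $x\in P^i_{l_i}$. Such a choice of indices forms a tuple $(l_1,\ldots,l_k)$, and $x$ belongs to the corresponding term on the right. The converse inclusion is immediate. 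The index set $\prod_{i}\{1,\ldots,L_i\}$ is finite, so the right-hand side is a finite union. Formally, I would argue by induction on $k$ and apply the two-set distributive law at each step.

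Each term $\bigcap_{i} P^i_{l_i}$ is the solution set of the stacked system $A^i_{l_i}x\leq b^i_{l_i}$ for $i=1,\ldots,k$. This is a finite linear system with rational data, so each term is a rational polyhedron, possibly empty. Empty terms can be dropped or kept, since the empty set is itself a rational polyhedron.

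There is no real obstacle here. The only point needing care is what ``rational polyhedron'' means. If, as in the paper, it means ``defined by rational inequalities'' or ``having rational vertices'', both notions are preserved under intersection: the faces of a rational inequality system have rational vertices, because any vertex solves a rational nonsingular subsystem. So the conclusion holds under either reading.
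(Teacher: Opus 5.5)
Your proof is correct and is essentially the paper's argument: apply \Cref{lem: 1Piecewise} to each inequality, distribute the intersection over the finite unions, and note that each term $\bigcap_i P^i_{l_i}$ is cut out by a stacked rational system. One small correction to your closing remark: ``having rational vertices'' on its own is not preserved under intersection. For example, the cone $\{x\in\R^2 : x_2\ge\sqrt{2}\,|x_1|\}$ and the square $[-1,1]\times[0,1]$ both have only rational vertices, yet their intersection has the vertices $(\pm 1/\sqrt{2},1)$. So the conclusion rests on keeping track of the rational inequality descriptions, which is exactly what your main paragraph does.
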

\begin{proof}
  By \Cref{lem: 1Piecewise}, we know that for any $i\in
  \{1,\ldots,k\}$, the set of solutions of $f_i(x)\leq 0$ is the union
  of rational polyhedra. Let us denote these polyhedra via $P_l^i$ with $l\in \{1,\ldots,L_i\}$
  and $L_i\in \N$.
  Hence, the set of solutions of the entire system is given by
  \begin{align*}
    \bigcap_{i=1}^k\bigcup_{l=1}^{L_i} P_l^i
    = \bigcup_{(l_i)_{i}\in \prod_{i=1}^k\{1,\ldots,L_i\}}
    \bigcap_{i=1}^k P^i_{l_i}
  \end{align*}
  and since $\bigcap_{i=1}^kP^i_{l_i}$ is a rational
  polyhedron (as it is the intersection of rational polyhedra),
  the proof is complete.
\end{proof}

We now state and proof the second promised result.
\begin{lemma}
  \label{lemma:opt-val-func-of-lp}
  Consider the optimal-value function function
  \begin{equation*}
    \varphi(c) \define \min \Defset{c^\top x}{A x \leq b}
  \end{equation*}
  of a linear optimization problem with rational constraint data~$A
  \in \mathbb{Q}^{m \times n}$ and $b \in \mathbb{Q}^{m}$.
  Then, $\varphi$ is a piecewise linear function with rational parameters.
\end{lemma}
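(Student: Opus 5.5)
The plan is to use the rational Minkowski--Weyl decomposition of the feasible region $P \define \defset{x\in\R^n}{Ax\le b}$ and to read off the pieces of $\varphi$ directly from its vertices and extreme rays. If $P=\emptyset$, then $\varphi\equiv+\infty$ and there is nothing to prove, so assume $P\neq\emptyset$. Since $A$ and $b$ are rational, $P$ has a decomposition $P=\mathrm{conv}(V)+\mathrm{cone}(R)$ with finite sets $V,R\subseteq\mathbb{Q}^n$. Here $V$ consists of one rational point from each minimal face, which can be taken as a vertex if $P$ is pointed; in general we pick rational points of the minimal faces and add $\pm$ a rational basis of the lineality space to $R$. $R$ is a finite set of rational generators of the recession cone $\defset{r}{Ar\le 0}$. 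Both sets are rational because they arise as solutions of rational linear systems.

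For fixed $c$, the linear function $c^\top x$ is bounded below on $P$ if and only if $c^\top r\ge 0$ for all $r\in R$. In that case the minimum is attained at a point of $V$, so
\begin{equation*}
  \varphi(c)=\min_{v\in V} c^\top v \quad\text{for all } c\in D\define\defset{c}{c^\top r\ge 0 \text{ for all } r\in R}.
\end{equation*}
For $v\in V$, define the pieces
\begin{equation*}
  P_v\define\defset{c\in\R^n}{c^\top r\ge 0 \text{ for all } r\in R,\ c^\top v\le c^\top w \text{ for all } w\in V}.
\end{equation*}
Each $P_v$ is a rational polyhedron, in fact a rational polyhedral cone. On $P_v$ we have $\varphi(c)=v^\top c$, so the slope $v$ is rational and the offset is $0$. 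Finally, $\bigcup_{v\in V}P_v=D$, because the minimum over the finite set $V$ is always attained by some $v$. This gives exactly the structure required in \Cref{def: PWA}, with $c_v=v$ and $d_v=0$. Hence $\varphi$ is piecewise linear, and a fortiori piecewise affine, with rational parameters.

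The main obstacle is domain and finiteness. \Cref{def: PWA} asks for $f:\R^n\to\R$ with pieces covering all of $\R^n$, but $\varphi=-\infty$ outside $D$ and $\varphi=+\infty$ when $P=\emptyset$. I would handle this in two steps. First, state the result on the domain $D$, which is itself a rational polyhedral cone. Second, check that this suffices in the application in \Cref{claim: SSetIntOpt}. There, $\Vcon_i(\cdot;\xint)$ is $\varphi$ composed with a rational affine map $\xcon_{-i}\mapsto c(\xcon_{-i})$, plus a rational affine term from the integer parts. The preimages of the $P_v$ under this map are again rational polyhedra, and the set of $\xcon_{-i}$ with finite value is the rational polyhedron $c^{-1}(D)$.

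Infinite values can then be absorbed into the case split used in \Cref{lem: SystemPiecewise}. An inequality with left-hand side $+\infty$, which happens only if $\xint^*_i$ is infeasible, cannot hold. An inequality whose right-hand side is $-\infty$ on the rational polyhedron complement pieces is violated there unless the left-hand side is also $-\infty$. Every such case is again described by finitely many rational polyhedra, so the conclusion of \Cref{claim: SSetIntOpt} is unaffected. Strictness can arise when describing the complement of $D$; I would avoid it by noting that, at an equilibrium, $\Vcon_i(\xcon^*_{-i};\xint^*)=\pi_i(x^*)$ is finite. The relevant feasible region is then contained in $c^{-1}(D)$, and we can restrict all systems to it.
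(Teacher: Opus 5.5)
Your proof is correct and is essentially the paper's argument: write $\varphi(c)$ as $\min_{v\in V} c^\top v$ over finitely many rational points of the feasible region, and take as pieces the rational cones on which a fixed $v$ attains this minimum; when the feasible region is a nonempty polytope ($R=\emptyset$, $D=\R^n$, $V$ the vertex set), your construction is exactly the paper's. Your extra treatment of recession directions, lineality and emptiness is a real improvement, not a detour: the paper's proof tacitly assumes a nonempty polytope, since otherwise $\varphi$ takes the values $\pm\infty$ and cannot be a real-valued piecewise affine function on all of $\R^n$ as \Cref{def: PWA} requires, so restricting the statement to the rational cone $D$ is the right repair.
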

\begin{proof}
  For any $c \in \R^n$, a solution to the linear problem
  $\defset{c^\top x}{A x \leq b}$ is attained at one of the finite
  vertices of the polyhedron given by $A x \leq b$.
  Let us denote these vertices with $x^1, \dotsc, x^s$.
  These vertices are all rational vectors as $A$ and $b$ are rational.
  Hence, the value function can be written as
  \begin{equation*}
    \varphi(c) = \min \Defset{c^\top x}{x \in \Set{x^1, \dotsc, x^s}}.
  \end{equation*}
  In particular, for any $i\in \{1,\ldots,s\}$, we have $\varphi(c) = c^\top x^i$
  if $c^\top x^i\leq c^\top x^j$ for all $j \in \{1,\ldots,s\}$. The latter condition
  is equivalent to $c\in P_i$ where $P_i$ is the rational polyhedron
  given by $P_i \coloneq \defset{c\in \R^n}{D_i c \leq 0}$ with
  \begin{equation*}
    D_i \coloneq
    \begin{bmatrix}
      (x^i)^\top-(x^1)^\top\\
      \vdots\\
      (x^i)^\top-(x^s)^\top
    \end{bmatrix}
  \end{equation*}
  Hence, $\varphi(c)$ is of the form as in \Cref{def: PWA} and, thus,
  the proof is finished.
\end{proof}

%%% Local Variables:
%%% mode: latex
%%% TeX-master: "RationalEQ"
%%% End:

\printbibliography

\enlargethispage{2cm}

\end{document}

%%% Local Variables:
%%% mode: latex
%%% TeX-master: t
%%% End: